\documentclass[preprint,12pt,authoryear]{elsarticle}

\usepackage{color}
\usepackage{amssymb}
\usepackage{amsmath}
\usepackage{amsfonts}
\usepackage{amsthm}
\usepackage{graphicx}
\usepackage{subcaption}
\usepackage{enumerate, footmisc, hyperref}

\usepackage{soul}

\numberwithin{equation}{section}
\numberwithin{figure}{section}

\addtolength{\topmargin}{-1in}
\setlength{\oddsidemargin}{0in}
\addtolength{\textwidth}{1.2in} 
\addtolength{\marginparsep}{-0.3in}

\newtheorem{theorem}{Theorem}
\newtheorem{lemma}{Lemma}
\newtheorem{cor}{Corollary}
\newtheorem{lem}{Lemma}

\newtheorem{defn}{Definition}

\newtheorem{prop}{Proposition}
\newtheorem*{rem}{Remark}
\newtheorem*{conj}{Conjecture}

\theoremstyle{definition}

\newcommand{\hbs}{\mathbf{\hat s}} 
\newcommand{\hs}{\hat s}

\newcommand{\cov}{\text{Cov}}

\newcommand{\R}{{\mathbb R}}
\newcommand{\Q}{\mathbb{Q}}
\newcommand{\N}{{\mathbb N}}
\newcommand{\Z}{{\mathbb Z}}
\newcommand{\bi}{\begin{itemize}}
\newcommand{\ei}{\end{itemize}}
\newcommand{\be}{\begin{enumerate}}
\newcommand{\ee}{\end{enumerate}}
\newcommand{\w}{\textbf{w}}

\setlength{\textheight}{9.3in}

\reversemarginpar

\begin{document}

\title{Loss of information in feedforward social networks}
\author{ \hspace{-.3 cm}
Simon Stolarczyk$^{1}$, Manisha Bhardwaj$^{1}$, 
Kevin E. Bassler$^{1,2,3}$, 
Wei Ji Ma$^{4}$, 
Kre\v{s}imir Josi\'{c}$^{1,5,6,}$\footnote{Corresponding Author: \url{josic@math.uh.edu}}
}


\begin{keyword}
Social networks \sep random graphs\sep information propagation\sep networks\sep graphs\sep Bayesian agents
\end{keyword}

\begin{abstract}
We consider model social networks in which information propagates directionally across layers of rational agents. Each agent makes a locally optimal estimate of the state of the world, and communicates this estimate to agents downstream.  When agents receive information from the same source their estimates are correlated. We show that the resulting redundancy can lead to the loss of information about the state of the world across layers of the network, even when all agents have full knowledge of the network's  structure.  A simple algebraic condition identifies networks in which information loss occurs, and we show that all such networks must contain a particular network motif.  We also study random networks asymptotically as the number of agents increases, and find a sharp transition in the probability of information loss at the point at which the number of agents in one layer exceeds the number in the previous layer. 
\end{abstract}

\maketitle

\noindent {$^{1}$Department of Mathematics, University of Houston, Houston, TX 77204}\\
{$^{2}$Department of Physics, University of Houston, Houston, TX 77204}\\
{$^{3}$Texas Center for Superconductivity, University of Houston, 77204}\\
{$^{4}$Center for Neural Science and Department of Psychology, New York University, New York, NY 10003} \\
{$^{5}$Department of Biology and Biochemistry, University of Houston, Houston, TX 77204}\\
{$^{6}$Department of BioSciences, Rice University, Houston, TX 77251}\\

\section{Introduction}\label{sec:intro}

While there are billions of people on the planet, we exchange information with 
only a small fraction of them.  How does information propagate through such social networks,
shape our opinions, and influence our decisions?  How do our interactions impact our choice of 
career or candidate in an election?  More generally,  how do we as agents in a  network aggregate noisy  
signals to infer the state of the world?  

These questions have a long history.   The general problem is not easy to describe using a tractable mathematical model, as 
it is difficult to provide a reasonable probabilistic description of the state of the world. We also lack a full understanding of how 
perception~\citep{Brunton:2013,Beck:2012}, and the information we exchange~\citep{Bahrami2010} shapes our decisions.
Progress has therefore relied on tractable idealized models  that mimic some of the main features 
of information exchange in social networks. 

Early models relied on computationally tractable
interactions, such as the majority rule assumed 
in Condorcet's Jury Theorem~\citep{condorcet}, or  local averaging assumed in the DeGroot model~\citep{degroot74}.
More recent models relied on the assumption of rational agents -- mostly Bayesian agents who use private signals and measurements (observations)
of each other's actions to maximize utility.   
These rational models are generally either sequential or iterative. In sequential models, agents are ordered and act in turn based on a private signal and the observed action of their
 predecessors~\citep{banerjee1992,bikhchandani1992}. In iterative models, agents make a measurement,
 and then iteratively exchange information with their neighbors~\citep{gale03,mossel2014}.  Sequential models have been used
 to illustrate information cascades~\citep{Bikhchandani:1998}, while  iterative models have been used to illustrate agreement and learning~\citep{Mossel:2014}.

Here we consider a sequential model in which information propagates directionally through layers of rational agents. 
The agents are part of a structured network, rather than a simple chain. 
As in the sequential model, 
we assume that information transfer is directional, and the recipient does not communicate information to its source.
This assumption could describe the propagation of information via
print or any other fixed medium.  

We assume that at each step, a layer of agents receive information 
from those in a previous layer. This is different from previous sequential models where agents  received 
information in turn from all their predecessors as in~\cite{banerjee1992, kleinberg, welch} and~\cite{bharat}. Importantly, the same information can reach an agent via multiple paths. Therefore, information received from agents in the previous layer can be redundant.  
We show that, depending on the network structure, rational agents with full knowledge of the network structure  cannot always resolve such redundancy.  As a result, an estimate of the state of the world can degrade over layers.  
We also show that network architectures that lead to information loss can amplify an agent's bias in subsequent layers. 

As an example, consider the network in Fig.~\ref{fig:basic}(a). 
We assume that the first-layer agents make measurements $x_1, x_2$, and $x_3$ of the state of the world, $s$, and that 
these measurements are normally distributed with equal variance. 
Each agent makes an estimate, $\hs^{(1)}_1, \hs^{(1)}_2$, and $ \hs^{(1)}_3,$ of $s$. The superscript and subscript refer to the layer and agent number, respectively.  An agent with global access to all first-layer estimates would be able to make the optimal (minimum-variance) estimate $\hs_\text{ideal} 
		= \frac 13 \left( \hs^{(1)}_1 +  \hs^{(1)}_2  +  \hs^{(1)}_3 \right)$ of $s$.
 
\begin{figure}[t!]

\centering
\includegraphics[scale=1]{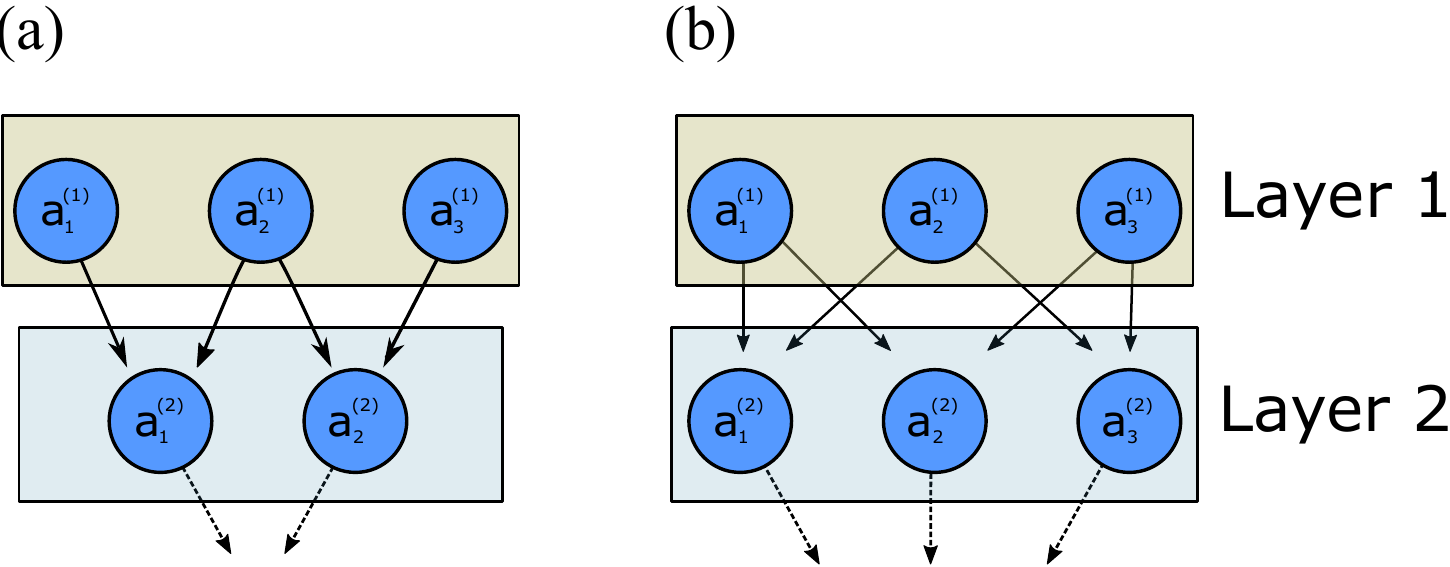}

\caption{Illustration of the general setup.  Agents in the first layer make measurements, 
$x_1, x_2$, and $x_3$, of a parameter $s$. In each layer agents make an estimate of this parameter, and communicate it to agents in the subsequent layer. 
We show that information about $s$ degrades across layers in the network in panel (a), but not in the network  in (b). }

\label{fig:basic}
\end{figure}

All agents in the first layer then communicate their estimates to one or both of the second-layer agents. 
These in turn use the received information to make their own estimates, 
$\hs^{(2)}_1 = \frac 12 ( \hs^{(1)}_1 + \hs^{(1)}_2)$ and
$\hs^{(2)}_2 = \frac 12 ( \hs^{(1)}_2 + \hs^{(1)}_3 )$.  
An agent receiving the two estimates from the second layer then takes their linear combination 
to estimate $s$.  However, in this network no linear combination of the locally optimal estimates, $\hat{s}^{(2)}_1$ and $\hat{s}^{(2)}_2,$ equals the best estimate, $\hs_\text{ideal},$ obtainable from all measurements in the first layer. Indeed,    
$$ \hs 	= \beta_1 \hs^{(2)}_1 + \beta_2 \hs^{(2)}_2 
		=  \beta_1  \left( \hs^{(1)}_1 +  \hs^{(1)}_2 \right) + \beta_2  \left( \hs^{(1)}_2 +  \hs^{(1)}_3 \right)
		\neq \hs_\text{ideal} 
		= \frac 13 \left( \hs^{(1)}_1 +  \hs^{(1)}_2  +  \hs^{(1)}_3 \right),  $$ 
with the inequality holding for any choice of $\beta_1, \beta_2$.  Moreover, assume the estimates of first-layer agents are biased, and  $ \hs^{(1)}_i = x_i + b_i$.  If the the other agents are unaware of this bias, then, as we will show, the final estimate is
 $\hs = (\frac 14, \frac 12, \frac 14) \cdot (\hs_1^{(1)}   + b_1, \hs_2^{(1)} + b_2,   \hs_3^{(1)} + b_3) = (\frac 14, \frac 12, \frac 14) \cdot \hs^{(1)} + (\frac 14, \frac 12, \frac 14) \cdot ( b_1,  b_2,  b_3).$
Thus the bias of the second agent in the first layer, $a_2^{(1)}$, has disproportionate weight in the final estimate.

In this example the information about the state of the world, $s,$ available from second-layer agents is less than that available from first-layer agents. In the preceding example the measurement $x_2$ is used by both agents in the second layer. The estimates of the two second-layer agents are therefore correlated, and  the final agent cannot disentangle them to recover the ideal estimate. We will show  that  the type of subgraph shown in Fig.~\ref{fig:basic}(a), which we call a \emph{W-motif},  provides the main obstruction to obtaining the best estimate in subsequent layers.

%
%

  \section{The Model}\label{sec:model}

We consider feedforward networks having $n$ layers and identify each node of a network with an agent.  The structure of the network is thus given by a directed graph with agents occupying the vertices.  Agents in each layer only communicate with those in the 
next layer.  For convenience, we will assume that layer $n$ consists of a single agent that receives information from 
all agents in layer $n-1$  (Fig.~\ref{fig:basic}).   This final agent in the last layer therefore makes the best estimate based on all the estimates in the next-to-last layer.  We will use this last agent's estimate to quantify information loss in the network.


We assume that all agents are Bayesian, and know the structure of the network. 
Every agent estimates an unknown parameter, $s \in {\mathbb R}$, but only the agents in the first layer make a measurement of this parameter. Each agent makes the best possible estimate given the information it receives and communicates this estimate to a subset of agents in the next layer. 
We also assume that measurements, $x_i,$ made by agents in the first layer are normally distributed and independent, $x_i \sim \mathcal N(x | s, \sigma_i^2).$
Furthermore, every agent in the network knows the variance of each measurement in the first layer,  $\sigma_i^2$. Also, for simplicity, we will assume that all agents share an improper, flat prior over $s$.  This assumption does not affect the main results.
 
An agent with access to all of the measurements, $\{x_i\}_i,$ has access to all the information available about $s$ in the network.  This agent can make an \textbf{ideal} estimate, $\hat{s}_\text{ideal} = \text{argmax}_s \; p(s | x_1, ... , x_n)$. We assume that the actual agents in the network are making locally optimal, maximum-likelihood estimates of $s$, and ask when the estimate of the final agent equals the ideal estimate, $\hat{s}_\text{ideal}$.


\paragraph{Individual Estimate Calculations}
Each agent in the first layer only have access to its own measurement, and makes an estimate equal to this measurement.  We therefore write  $\hat s_i^{(1)} = x_i$. 
We denote the  $j^{\text{th}}$ agent in layer $k$ by $a^{(k)}_j$. Each of these agents makes an estimate, $\hat s_j^{(k)}$ of $s$, using the estimates communicated by its neighbors in the previous layer.  Under our assumptions, the posterior computed by any agent is normal and the vector of estimates in a layer follows a multivariate Gaussian distribution.  
As agents in the second layer and beyond can share upstream neighbors, the covariance between their estimates is typically nonzero. We show that under the assumption that the variance of the initial measurements and the structure of the network is known to all agents, each agent knows the full joint posterior distribution over $s$ for all agents it receives information from.

\paragraph{Weight Matrices}
We define the connectivity matrix $C^{(k)}$ for $1 \leq k \leq n-1$ as,

\begin{equation} \label{def:connection_matrix}
C^{(k)}_{ij} = 
\begin{cases} 	1, & \text{if } a_j^{(k)} \text{ communicates with } a_i^{(k+1)} \\ 
				0, & \text{otherwise.} 
				\end{cases}
\end{equation} 
An agent receives a subset of estimates from the previous layer determined by this connectivity matrix. The agent then uses this information to make its own, maximum-likelihood estimate of $s$.  By our assumptions, this estimate will  be a linear combination of the communicated estimates~\citep{kay}. Denoting by $\hbs^{(k)}$ the vector of estimates in the $k^{\text{th}}$ layer, we can therefore write $\hbs^{(k + 1)}_i  = \w_i^{(k+1)} \cdot \hbs^{(k)}$, and
$$ \hbs^{(k+1)} = W^{(k+1)} \hbs^{(k)}.$$
Here $W^{(k+1)}$ is a matrix of weights applied to the estimates in the $k^{\text{th}}$ layer.

\paragraph{Weighting by Precision}

We can write $\hbs^{(1)} = W^{(1)} \mathbf{x}$ where $W^{(1)}$ is the identity matrix and $\mathbf{x}$ is the vector of measurements made in the first layer.  We assume that all measurements  have finite, nonzero variance. Defining 
$w_i := \frac1{\sigma_i^2}$, we can calculate $W^{(2)}$ entrywise: $w^{(2)}_{ij}$ is  0 if agent $a^{(2)}_i$ does \emph{not} communicate with  $a^{(1)}_j$. Otherwise 
$ w^{(2)}_{ij} = \frac{w_j^{(1)}}{\sum_{k \rightarrow i} w_k^{(1)} }
$, where the sum is taken over all agents in the first layer that communicate with agent $a^{(2)}_i$.
Therefore,
\begin{equation} \label{Eq:secondlayer_estimates}
\hbs^{(2)} = W^{(2)} \; \hbs^{(1)} = W^{(2)} W^{(1)} \mathbf{x}\;.
\end{equation}

\paragraph{Covariance Matrices}
The estimates in the second layer and beyond can be correlated. Let $L_k$ be the number of agents in the  $k^{\text{th}}$ layer and for $2 \leq k \leq n -1$ define 
$\Omega^{(k)} = (\xi^{(k)}_{ij})$ as the $L_k \times L_k$ covariance matrix of estimates in the $k^{\text{th}}$ layer,
$$\xi^{(k)}_{ij}
= \cov(\hs^{(k)}_i, \hs^{(k)}_j) .$$
When all of the weights are known, we have 
\begin{equation} \label{E:weights}
 \hbs^{(k)} 
= W^{(k)} \hbs^{(k-1)} 
= W^{(k)} W^{(k-1)} \hbs^{(k-2)} 
= \dots 
= \left( \prod_{l = 0}^{k-2} W^{(k-l)} \right) \hbs^{(1)} .
\end{equation} 
The $i^{\text{th}}$ row of $ \left( \prod_{l = 0}^{k-2} W^{(k-l)} \right)$ is the vector of weights that the agent $a_i^{(k)}$ applies to the first-layer estimates, since its entries are the coefficients in $s^{(k)}_i$.

The complete covariance matrix, $\Omega^{(k)},$ can therefore be written as
\begin{align*}\addtocounter{equation}{1}\tag{\theequation}  \label{E:omega} 
\Omega^{(k)} &= \cov(\hbs^{(k)}) = \cov (W^{(k)} \hbs^{(k-1)}) = W^{(k)} \; \cov(\hbs^{(k-1)}) \; \left( W^{(k)} \right)^{\mathrm T}
 \\ &= \left(\prod_{l = 0}^{k-2} W^{(k-l)} \right) \cov(\hbs^{(1)} ) \left(\prod_{l = 0}^{k-2} W^{(k-l)} \right)^{\mathrm T}  \\
 &= \left(\prod_{l = 0}^{k-2} W^{(k-l)} \right) \text{Diag} \left(\frac 1{w_1}, ..., \frac 1 {w_{L_1}} \right)  \left(\prod_{l = 0}^{k-2} W^{(k-l)} \right)^{\mathrm T} .
\end{align*}

Now the $i^\text{th}$ agent in layer $k \geq 3$, $a_i^{(k)}$, can use $\Omega^{(k-1)}$ to calculate $\w_i^{(k)}$. If the agent is not connected to all agents in the $(k-1)^{\text{th}}$ layer, it uses the submatrix of $\Omega^{(k-1)}$ with rows and columns corresponding to the agents in the previous layer that communicate their estimates to it. We denote this submatrix $R^{(k-1)}_i$. As in \cite{Mossel2010}, we assume that we remove edges from the graph so that all submatrices $R^{(k-1)}_i$ are invertible, but all estimates are the same as in the original network.

%
%

An agent thus receives estimates that follow a multivariate normal distribution, $ \mathcal{N}( \hbs^{(k-1)}_{j \to i}, R^{(k-1)}_i)$, see \cite{kay}.   The weights assigned by agent $a_i^{(k)}$ to the estimates of agents in the previous layer are therefore  (see also \cite{Mossel2010}), 
\begin{equation} \label{E:weight} 
\tilde{\w}^{(k)}_i = \frac{\mathbf{1}^{\text{T}} \; \left( R_i^{(k-1)} \right)^{-1} } {\mathbf{1}^{\text{T}} \; \left( R_i^{(k-1)} \right)^{-1} \; \mathbf{1} } .
\end{equation}
We define $\w^{(k)}_i$ by using the corresponding entries from $\tilde{\w}^{(k)}_i$ and setting the remainder to zero. 
In the following we describe the maximum-likelihood estimate that can be made from all the estimates in
a layer.  For simplicity, we denote this final estimate by $\hat{s}$.   The following  results are standard~\citep{kay}.

\begin{prop}\label{theorem1}
The posterior distribution over $s$ of the final agent is normal with 
\begin{equation}\label{eqn_ffn_nlayer}
\hat{s} =   
\frac {\mathbf{1}^{\text{T}} \; (\Omega^{(n-1)} )^{-1} } 
		{\mathbf{1}^{\text{T}} \; 
(\Omega^{(n-1)})^{-1} \; \mathbf{1} } \hbs^{(n-1)}
 \quad 
 \text{and} 
 \quad 
 \emph{Var} \; [\hat{s}] 
 = \frac {1} {\mathbf{1}^{\text{T}} \; (\Omega^{(n-1)})^{-1} \; \mathbf{1} } 
\end{equation}
where $\Omega^{(n-1)}$ is defined by Eq.~\eqref{E:omega} and Eq.~\eqref{E:weight}. Here  $\hat s$ is the maximum-likelihood, as well as minimum-variance, unbiased estimate of $s$.
\end{prop}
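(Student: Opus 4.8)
The plan is to reduce the statement to the classical Gaussian location-family computation, once the right distributional input is in place. The key preliminary fact I would record first is that $\hbs^{(n-1)} \sim \mathcal N\!\big(s\,\mathbf 1,\ \Omega^{(n-1)}\big)$. The covariance half of this is precisely Eq.~\eqref{E:omega}, and invertibility of $\Omega^{(n-1)}$ is guaranteed by the edge-removal convention already imposed (for the last layer the relevant submatrix $R^{(n-1)}_1$ is $\Omega^{(n-1)}$ itself). For the mean, I would show by induction on the layer index $k$ that $\mathbb E[\hbs^{(k)}] = s\,\mathbf 1$: the base case $\hs^{(1)}_i = x_i$ has $\mathbb E[x_i]=s$, and the inductive step follows because every weight matrix has unit row sums — for $k=2$ this is read off the explicit formula $w^{(2)}_{ij} = w^{(1)}_j / \sum_{k\to i} w^{(1)}_k$, and for $k\ge 3$ it is built into Eq.~\eqref{E:weight}, since $\mathbf 1^{\text{T}}\tilde{\w}^{(k)}_i = 1$. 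Joint normality is preserved under the linear maps $W^{(k)}$, so $\hbs^{(n-1)}$ is multivariate normal with the asserted parameters.

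With that in hand, I would write the likelihood the final agent sees,
\[
p\big(\hbs^{(n-1)} \mid s\big) \;\propto\; \exp\!\Big(-\tfrac12\big(\hbs^{(n-1)} - s\,\mathbf 1\big)^{\text{T}} (\Omega^{(n-1)})^{-1}\big(\hbs^{(n-1)} - s\,\mathbf 1\big)\Big),
\]
and, using the flat prior, identify the posterior over $s$ with this expression regarded as a function of $s$. Expanding the quadratic form in $s$ produces a coefficient $\mathbf 1^{\text{T}}(\Omega^{(n-1)})^{-1}\mathbf 1$ on $s^2$, a linear term $-2s\,\mathbf 1^{\text{T}}(\Omega^{(n-1)})^{-1}\hbs^{(n-1)}$, and an $s$-independent remainder; completing the square then shows the posterior is Gaussian in $s$ with the mean and variance displayed in Eq.~\eqref{eqn_ffn_nlayer}. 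The same completion of the square identifies the mode of the likelihood, so $\hat s$ is the maximum-likelihood estimate as well.

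The remaining two assertions are then short. Unbiasedness is immediate from linearity of expectation together with the preliminary fact:
\[
\mathbb E[\hat s] = \frac{\mathbf 1^{\text{T}}(\Omega^{(n-1)})^{-1}\,\mathbb E[\hbs^{(n-1)}]}{\mathbf 1^{\text{T}}(\Omega^{(n-1)})^{-1}\mathbf 1} = \frac{\mathbf 1^{\text{T}}(\Omega^{(n-1)})^{-1}(s\,\mathbf 1)}{\mathbf 1^{\text{T}}(\Omega^{(n-1)})^{-1}\mathbf 1} = s.
\]
For minimum variance I would invoke the Cram\'er--Rao bound for the family $\hbs^{(n-1)}\sim\mathcal N(s\,\mathbf 1,\Omega^{(n-1)})$: differentiating the log-likelihood twice in $s$ gives Fisher information $\mathbf 1^{\text{T}}(\Omega^{(n-1)})^{-1}\mathbf 1$, which is exactly the reciprocal of $\mathrm{Var}[\hat s]$ computed above, so $\hat s$ attains the bound and is efficient, hence the minimum-variance unbiased estimator among measurable functions of the final agent's input — equivalently, by completeness of the Gaussian location family together with Lehmann--Scheff\'e.

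Since every ingredient is classical (cf.~\cite{kay}), I do not expect a genuine obstacle; the step needing a little care is the bookkeeping behind $\mathbb E[\hbs^{(n-1)}]=s\,\mathbf 1$, namely verifying that each weight matrix produced by Eq.~\eqref{E:weight} has unit row sums and that the edge-removal step (which preserves all estimates while making the $R^{(k-1)}_i$ invertible) leaves this property intact. I would isolate that as a one-line lemma and then let the linear-algebra identities above run their course. It is worth stating one conceptual caveat explicitly: the minimum-variance claim is relative to the information actually reaching the final agent through $\hbs^{(n-1)}$, which in general carries strictly less Fisher information about $s$ than $\mathbf x$ does — this gap is exactly the phenomenon analyzed in the rest of the paper.
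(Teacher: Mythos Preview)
Your proposal is correct and fills in exactly the standard Gaussian location-family argument that the paper defers to by citation: the paper does not actually prove Proposition~\ref{theorem1} but simply declares it a standard result and cites \cite{kay}. Your route---establishing $\hbs^{(n-1)}\sim\mathcal N(s\,\mathbf 1,\Omega^{(n-1)})$ via the unit-row-sum property of the weight matrices, then completing the square and invoking Cram\'er--Rao---is precisely the textbook derivation the citation points to, so there is nothing to contrast.
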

It follows from Eq.~\eqref{E:weights} that the estimate of any agent in the network is a convex linear combination of 
the estimates in the first layer.

\paragraph{Examples} Returning to the example in Fig.~\ref{fig:basic}(a) we have 

$$ C^{(1)} = \begin{pmatrix}
1 & 1 & 0 \\ 0 & 1 & 1 
\end{pmatrix}
, \;
W^{(2)} = \begin{pmatrix}
\frac 12 & \frac 12 & 0 \\
0 & \frac 12 & \frac 12 
\end{pmatrix}
,\;
\Omega^{(2)} = \begin{pmatrix}
\frac 12 & \frac 14 \\
\frac 14 & \frac 12 
\end{pmatrix}
,\;
(\Omega^{(2)})^{-1} = \frac {16}3 \begin{pmatrix}
\frac 12 & -\frac 14 \\
- \frac 14 & \frac 12
\end{pmatrix}$$

The final agent applies the weights $W^{(3)} =  \begin{pmatrix} \frac 12, \frac 12\end{pmatrix}$ to the estimates from the second layer.  We  thus have the final estimate
$\hat{s} = \begin{pmatrix} \frac 14, \frac 12, \frac 14 \end{pmatrix} \cdot \hbs^{(1)}$
with $ \text{Var} \; [\hat{s}] = \frac 38$. The variance of the ideal estimate is $\frac 13$.

On the other hand, the final agent in the example in Fig.~\ref{fig:basic}(b) makes an ideal estimate:  Here  $W^{(2)} = \begin{pmatrix}
\frac 12 & \frac 12 & 0 \\ \frac 12 & 0 & \frac 12 \\ 0 & \frac 12 & \frac 12
\end{pmatrix}$, $\Omega^{(2)} = \begin{pmatrix} \frac 12 & \frac 14 & \frac 14 \\ \frac 14 & \frac 12 & \frac 14 \\ \frac 14 & \frac 14 & \frac 12 \end{pmatrix}$,  and after inverting $\Omega^{(2)}$ we see that applying a weight of $\frac 13$ to every agent in the second layer gives the ideal estimate, $\hat{s} = \begin{pmatrix}
\frac 13, \frac 13, \frac 13
\end{pmatrix}  \cdot \hbs^{(1)} $.

\begin{rem}
If the agents have a proper prior:
\begin{equation}\label{prior}
p (s | \chi, \nu) = \mathcal{N}( s | \chi, \frac 1 \nu) = \sqrt{\frac {\nu}{ 2 \pi}} \exp \left( \frac{- \nu}{2 \pi}( s - \chi)^2 \right) ,
\end{equation}  then agents in the first layer make the estimate,
$$\hat{s}_i^{(1)} = \frac{ \frac{1}{\sigma_i^2}}{\frac{1}{\sigma_i^2} + \nu} x_i +\frac{\nu}{\frac{1}{\sigma_i^2} + \nu} \chi,$$
with a similar form in the following layers.
This does not change the subsequent results as long as all agents have the same prior. Also, if each agent
in the network makes a measurement, the general ideas remain unchanged.
\end{rem}

\section{Results} \label{sec:results}

We ask what graphical conditions need to be satisfied so that the agent in the final layer makes an ideal estimate. That is, when does knowing all estimates of the agents in the $(n-1)^{\text{st}}$ layer give an estimate that is as good as possible given the measurements of all first-layer agents. We refer to a network in which the final estimate is ideal as an \textbf{ideal} network.

\begin{prop}\label{matrix_cond}
A network with $n$ layers and $\sigma_i^2 \neq 0$ for $i = 1, \dots, L_1$, is ideal if and only if the vector of inverse variances, $(w_1, ..., w_{L_1}),$
is in the row space of the weight matrix product
$(\prod_{l = 0}^{n - 3} W^{(n - 1 -l)} )$.
\end{prop}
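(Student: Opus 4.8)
The plan is to substitute the explicit weight and covariance matrices into the estimate of Proposition~\ref{theorem1} and thereby reduce the claim to an elementary statement about the row space of $M := \prod_{l=0}^{n-3} W^{(n-1-l)}$. Write $w := (w_1,\dots,w_{L_1})^{\T}$ and $D := \text{Diag}(w_1,\dots,w_{L_1})$. By Eq.~\eqref{E:weights} we have $\hbs^{(n-1)} = M\,\mathbf{x}$, and by Eq.~\eqref{E:omega} we have $\Omega^{(n-1)} = M D^{-1} M^{\T}$, which is invertible by the standing assumption. The ideal estimate is $\hs_{\text{ideal}} = \frac{1}{\1^{\T} w}\, w^{\T}\mathbf{x}$, while Proposition~\ref{theorem1} gives $\hs = \frac{\1^{\T}(\Omega^{(n-1)})^{-1} M}{\1^{\T}(\Omega^{(n-1)})^{-1}\1}\,\mathbf{x}$. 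Since both estimates are linear functionals of $\mathbf{x}$ with coefficient vectors that do not depend on $\mathbf{x}$, the network is ideal precisely when
\[
\frac{\1^{\T}(\Omega^{(n-1)})^{-1} M}{\1^{\T}(\Omega^{(n-1)})^{-1}\1} \;=\; \frac{1}{\1^{\T} w}\, w^{\T} .
\]

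Everything then hinges on two elementary facts about $M$. First, each row of $M$ is the vector of weights that some agent in layer $n-1$ applies to the first-layer estimates, and, as noted after Proposition~\ref{theorem1}, each of these is a convex combination; hence $M\1_{L_1} = \1_{L_{n-1}}$. Second, $D^{-1} w = \1_{L_1}$ directly from the definitions; combining the two gives $M D^{-1} w = \1_{L_{n-1}}$.

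For the ``if'' direction, suppose $(w_1,\dots,w_{L_1})$ lies in the row space of $M$, say $w = M^{\T}\mathbf{u}$ with $\mathbf{u} \in \R^{L_{n-1}}$. Then $\Omega^{(n-1)}\mathbf{u} = M D^{-1} M^{\T}\mathbf{u} = M D^{-1} w = \1_{L_{n-1}}$, so $\mathbf{u} = (\Omega^{(n-1)})^{-1}\1$. Consequently $\1^{\T}(\Omega^{(n-1)})^{-1} M = \mathbf{u}^{\T} M = w^{\T}$ and, using $M\1 = \1$, $\1^{\T}(\Omega^{(n-1)})^{-1}\1 = \mathbf{u}^{\T}\1_{L_{n-1}} = \mathbf{u}^{\T} M \1_{L_1} = w^{\T}\1_{L_1} = \1^{\T} w$; the displayed identity follows, so the network is ideal. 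For the ``only if'' direction, if the displayed identity holds then $w^{\T}$ is a scalar multiple of $\1^{\T}(\Omega^{(n-1)})^{-1} M$, which is a row vector times $M$, so $w$ lies in the row space of $M$.

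The one step with real content is the observation that $M D^{-1} w = \1$: this is exactly what collapses the relation $\Omega^{(n-1)}\mathbf{u} = \1$ and identifies $\mathbf{u}$ with $(\Omega^{(n-1)})^{-1}\1$, after which everything is bookkeeping. The points that warrant a little care are the reduction of ``ideal network'' to equality of the two coefficient vectors (the estimates are linear in the data with data-independent coefficients, so comparing the estimates is comparing coefficient vectors; alternatively one invokes uniqueness of the minimum-variance unbiased estimator from Proposition~\ref{theorem1}), and the invertibility of $\Omega^{(n-1)}$, which is the paper's standing assumption and is what makes the preimage $\mathbf{u}$ above well defined.
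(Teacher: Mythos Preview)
Your proof is correct, but it takes a more computational route than the paper's. The paper argues in one step that the network is ideal if and only if the ideal weight vector $\frac{1}{\sum_j w_j}(w_1,\dots,w_{L_1})$ is \emph{achievable} as some linear combination $(\beta_1,\dots,\beta_{L_{n-1}})\,M$ of the layer-$(n-1)$ estimates, implicitly relying on the optimality of the final agent's estimator (Proposition~\ref{theorem1}): if the ideal estimate is achievable by \emph{some} linear combination, the minimum-variance choice must hit it. This collapses the problem immediately to a row-space condition without ever writing down $\Omega^{(n-1)}$ or its inverse.

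You instead compute the final agent's coefficient vector explicitly via Proposition~\ref{theorem1} and the covariance formula $\Omega^{(n-1)} = M D^{-1} M^{\T}$, and match it against the ideal one. The substantive step in your argument is the identity $M D^{-1} w = M\1 = \1$, which lets you solve $\Omega^{(n-1)}\mathbf{u} = \1$ and identify $\mathbf{u}$ with any preimage of $w$ under $M^{\T}$. This is a nice structural observation that the paper never needs, and it makes your ``if'' direction fully constructive: you exhibit the final agent's actual weights and verify they reproduce the ideal estimate, rather than appealing to optimality. The cost is a longer argument and a dependence on the invertibility of $\Omega^{(n-1)}$ that the paper's proof sidesteps entirely. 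Your ``only if'' direction, on the other hand, is essentially the same as the paper's once the displayed identity is in hand.
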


\begin{proof} 
In this setting the ideal estimate is 
\begin{equation} \label{E:opt}
\hs_{\text{ideal}}  
	= \frac{1}{\sum_{i} w_i}\sum_{i = 1}^{L_1} w_i \hat{s}^{(1)}_i .
\end{equation} 
The network is ideal if and only if there are coefficients $\beta_j \in \R$ such that  
\begin{equation*}
\hs_{\text{ideal}}  = \sum_{j = 1}^{L_{n-1}} \beta_j \hs_j^{(n-1)}. 
\end{equation*}
Matching coefficients with Eq.~\eqref{E:opt}, we need 
\begin{equation*}
 \frac{1}{\sum_j w_j} \sum_{i = 1}^{L_1} w_i \hat{s}^{(1)}_i 
 	= \left(\beta_1, ... , \beta_{L_{n-1}}\right) \cdot \hbs^{(n-1)},
\end{equation*}	
or equivalently,	
\begin{align*}
\frac{1}{\sum_j w_j} \left(w_1, ... , w_{L_1}\right) \cdot \hbs^{(1)} 
	&= \left(\beta_1, ... , \beta_{L_{n-1}}\right) \cdot W^{(n-1)} \hbs^{(n-2)} \\	
	&= \left(\beta_1, ... , \beta_{L_{n-1}}\right) \cdot \left( \prod_{l = 0}^{n -3} W^{(n - 1 -l)} \right) \hbs^{(1)}.
\end{align*}
Equality holds exactly when $(w_1, ..., w_{L_1})$ is in the row space of $\left(\prod_{l = 0}^{n - 3} W^{(n - 1 -l)} \right)$.
\end{proof}

In particular,  a three-layer network with  $\sigma_i^2 = \sigma$ for all $i \in \{1, \dots, L_1\}$ is ideal if and only if the vector  $\vec{1} = (1, 1, ... , 1)$ is in the row space of the connectivity matrix $C^{(1)}$ defined by Eq.~\eqref{def:connection_matrix}.  We will  use and extend this observation below.

\subsection{Graphical Conditions for Ideal Networks}\label{graphical_conditions}

We say that a network contains a \textbf{W-motif} if   two agents downstream receive common input from a first-layer agent, as well as private input from two distinct first-layer agents.  Examples are shown in Fig.~\ref{fig:basic}(a) and Fig.~\ref{fig:multiW}. A rigorous definition follows. 

We will show that \textit{all networks that are not ideal contain a W-motif}. However, the converse is not true: The network  in Fig.~\ref{fig:basic}(b) contains many W-motifs, but  is ideal.  Therefore ideal networks can contain a W-motif, as the redundancy introduced by a W-motif can sometimes be resolved. Hence, additional graphical conditions determine if the network is ideal. 

\begin{figure}[h]
\begin{center}
\includegraphics[scale=.8]{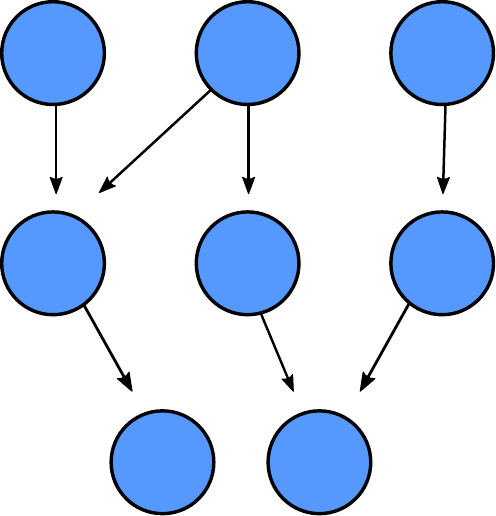}
\end{center}
\caption{A W-motif spanning three layers.}
\label{fig:multiW}
\end{figure}

As shown in Fig.~\ref{fig:multiW}, in a W-motif there is a directed path  from a single agent in the first layer to two agents in the third layer. There are also paths from distinct first-layer agents to the two third-layer agents.  This general structure is captured by the following definitions.

\begin{defn} The path matrix $P^{k l}$, $l < k$, from layer $l$ to layer $k$ is defined by, 
$$ P^{k l}_{i j} = \begin{cases} 1, & \text{if there is a directed path from agent } a_j^{(l)} \text{ to agent } a^{(k)}_i \\ 0, & \text{otherwise.}
\end{cases}$$
\end{defn}

\begin{defn}
 A network contains a W-motif if a path matrix from the first layer, $P^{k 1},$ has a $2\times3$ submatrix equal to $\begin{pmatrix}
  1 & 1 & 0 \\ 0 & 1 & 1
  \end{pmatrix}$  (modulo column permutation). Graphically, two agents in layer $k$ are connected to one common, and two distinct agents in layer $1$. 
\end{defn}

\begin{theorem}
\label{thm:overlap}
A non-ideal network in which every agent communicates its estimate to the subsequent layer must contain a W-motif. Equivalently, if there are no W-motifs, then the network is ideal.
\end{theorem}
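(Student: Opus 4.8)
The plan is to prove the contrapositive in the form stated: assuming the network has no W-motif, I will show the vector of inverse variances $(w_1,\dots,w_{L_1})$ lies in the row space of the total weight-product matrix $M := \prod_{l=0}^{n-3} W^{(n-1-l)}$, so that Proposition~\ref{matrix_cond} gives ideality. The key structural observation is that the support pattern of $M$ — which first-layer agents feed into which $(n-1)$st-layer agent — is exactly the path matrix $P^{\,n-1,\,1}$, since all edges are retained (every agent communicates to the next layer) and a nonzero weight propagates along every directed path. So the hypothesis "no W-motif" translates into a combinatorial statement purely about $P^{\,n-1,\,1}$: no two rows of $P := P^{\,n-1,\,1}$ have supports $S, T$ with $S \setminus T$, $S \cap T$, and $T \setminus S$ all nonempty.

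First I would analyze this combinatorial condition on $P$. The absence of a W-motif means that for any two rows, their supports are either nested ($S \subseteq T$ or $T \subseteq S$) or disjoint. A family of subsets of $\{1,\dots,L_1\}$ in which every pair is nested-or-disjoint is a laminar family; moreover each row support is nonempty and (because the final layer is a single agent receiving from everyone, or by an easy reachability argument) the union of the supports that reach the penultimate layer is all of $\{1,\dots,L_1\}$. I would then argue that the rows of $M$, restricted to any maximal support, together with the laminar nesting structure, let me build the all-important vector: on the "leaves" of the laminar family the corresponding rows of $M$ have disjoint supports, and within each leaf block the $W$-weights are proportional to the $w_i$ by the precision-weighting rule (this is where the explicit form $w^{(2)}_{ij} = w_j / \sum_{k\to i} w_k$ and its inductive analogue, together with Eq.~\eqref{E:weight}, are used: a locally optimal estimate built from mutually disjoint-support upstream estimates weights them by precision, and composing such steps along a nested chain preserves this). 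Summing an appropriate nonnegative combination of these rows reconstructs $\frac{1}{\sum_i w_i}(w_1,\dots,w_{L_1})$, placing it in the row space of $M$.

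More carefully, I would set up an induction on the number of layers $n$. For $n=3$ the statement is the remark right after Proposition~\ref{matrix_cond} specialized appropriately: $C^{(1)}$ has no W-motif iff its rows are laminar-or-disjoint, and one checks directly that $(w_1,\dots,w_{L_1})$ is a nonnegative combination of the rows of $W^{(2)}$ in that case. For the inductive step, I would peel off the last layer: if the network on layers $1,\dots,n-1$ (with a single artificial terminal agent appended) has no W-motif, then neither does the network on layers $1,\dots,n-2$; apply the inductive hypothesis to get $(w_1,\dots,w_{L_1})$ in the row space of $\prod_{l=0}^{n-4} W^{(n-2-l)}$, i.e. expressible through the estimates $\hbs^{(n-2)}$, and then show the extra factor $W^{(n-1)}$ does not destroy this — because the no-W-motif condition on the full network forces the rows of $W^{(n-1)}$ to have nested-or-disjoint supports in terms of their first-layer reach, which is exactly what is needed to invert/recombine them against the already-known combination.

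The main obstacle I anticipate is the inductive step's bookkeeping: showing that "no W-motif at the level of $P^{\,n-1,\,1}$" descends to a usable condition on $W^{(n-1)}$ acting on the $(n-2)$nd layer, and that the recombination of rows can be done with the right coefficients. The subtlety is that a W-motif is defined via paths from layer $1$, not via the immediate connectivity $C^{(n-2)}$, so two agents in layer $n-1$ could share an immediate upstream neighbor in layer $n-2$ without creating a layer-$1$ W-motif (if that shared neighbor's own support is nested inside the others). Handling this requires carefully tracking how supports compose under $W^{(n-1)}$ and using the laminar structure to choose a consistent linear combination; the precision-weighting formula Eq.~\eqref{E:weight} is what guarantees the numerical coefficients line up, but verifying this cleanly — rather than by brute-force matrix algebra — is the crux of the argument.
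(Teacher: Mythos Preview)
Your overall strategy --- laminar (nested-or-disjoint) supports together with ``precision-proportional'' rows of $M$ --- is essentially the paper's, and the observation that the absence of a W-motif forces the supports in $P^{k,1}$ to form a laminar family is exactly Lemma~\ref{noWOrder}. However, there is a genuine gap in the reduction you make at the outset.

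You assert that ``the hypothesis `no W-motif' translates into a combinatorial statement purely about $P^{\,n-1,\,1}$,'' and then try to deduce that each row of $M$ is proportional to $(w_i)$ on its support from the laminar structure of $P^{\,n-1,\,1}$ alone. This is false: laminar supports at the penultimate layer do \emph{not} control the numerical entries of $M$. A four-layer counterexample with unit variances: first-layer agents $1,2,3$; second-layer agents $A\leftarrow\{1,2\}$ and $B\leftarrow\{2,3\}$; third-layer agents $C\leftarrow\{A,B\}$ and $D\leftarrow\{A\}$; final agent receiving $\{C,D\}$. Here $P^{3,1}$ has rows $(1,1,1)$ and $(1,1,0)$, which are nested, so your condition on $P^{\,n-1,\,1}$ is satisfied. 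But $C$'s weight vector with respect to layer~1 is $(\tfrac14,\tfrac12,\tfrac14)$, not proportional to $(1,1,1)$, and no linear combination of $(\tfrac14,\tfrac12,\tfrac14)$ and $(\tfrac12,\tfrac12,0)$ equals $(\tfrac13,\tfrac13,\tfrac13)$: the network is not ideal. The obstruction is the W-motif in $P^{2,1}$, which your reduction to $P^{\,n-1,\,1}$ discards. The definition of W-motif ranges over \emph{all} $P^{k,1}$, and the argument genuinely needs this.

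The paper runs the induction forward on layers with a \emph{stronger} inductive hypothesis: at every layer $k$, each agent's weight vector with respect to layer~1 already has the form $\tfrac{1}{\sum_{i=1}^r w_i}(w_1,\dots,w_r,0,\dots,0)$ up to relabeling. This is established layer-by-layer, using the no-W-motif condition at layer $k$ (to get laminar supports among the incoming estimates via Lemma~\ref{noWOrder}) together with an explicit computation (Lemmas~\ref{L:est} and~\ref{L:ideal}) that a locally optimal agent receiving estimates of this form with pairwise nested-or-disjoint supports again outputs one of this form. Your proposed induction --- carrying only ``$(w_1,\dots,w_{L_1})$ lies in the row space of the product through layer $n-2$'' and then passing $W^{(n-1)}$ across --- is too weak: left-multiplication by $W^{(n-1)}$ can only shrink the row space, so knowing the ideal vector was there before says nothing about after. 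You acknowledge this difficulty in your last paragraph, but resolving it requires precisely the per-agent strengthening above, at which point the argument collapses to the paper's.
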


The proof of this theorem can be found in \ref{Appendix}.  Intuitively, any agent receives estimates that are a linear combination of first-layer measurements. If there are no W-motifs, any two estimates are either obtained from disjoint sets of measurements, or the measurements  in the estimate of one agent contain the measurements  in the estimate of another.  When measurements are disjoint, there are no correlations between the estimates and thus no degradation of information.  When one set of measurements contains the other, then the estimates in the subset are redundant and can be discarded.  Therefore, this redundant information does not cause a degradation of the final estimate.

\subsection{Sufficient Conditions for Ideal Three-Layer Networks}

We next consider only  three-layer networks. This allows us to give a graphical interpretation of the algebraic condition describing ideal networks in Proposition~\ref{matrix_cond}. To do so, we will use the following corollary of the proposition.

\begin{cor} \label{Cor1}
Let $C^{(1)}$ be defined as in Eq.~\eqref{def:connection_matrix}. Then  
a three-layer network is ideal if and only if  the vector $ m \vec{1} $ is in the row space of  $C^{(1)}$ over $\Z$ for some nonzero $m \in \N$.
\end{cor}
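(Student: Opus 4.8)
The plan is to specialize Proposition~\ref{matrix_cond} to the three-layer case and translate its condition into an equivalent statement over $\Z$. For $n=3$ the weight-matrix product $\prod_{l=0}^{n-3} W^{(n-1-l)}$ is simply $W^{(2)}$, so by Proposition~\ref{matrix_cond} the network is ideal if and only if $(w_1,\dots,w_{L_1})$ lies in the row space of $W^{(2)}$ over $\R$. The first step is to relate the row space of $W^{(2)}$ to that of $C^{(1)}$. Recall that $W^{(2)}$ is obtained from $C^{(1)}$ by rescaling: row $i$ of $W^{(2)}$ is row $i$ of $C^{(1)}$ with each nonzero entry $C^{(1)}_{ij}=1$ replaced by $w_j / \sum_{k\to i} w_k$. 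Writing $D = \operatorname{Diag}(w_1,\dots,w_{L_1})$ and letting $E$ be the diagonal matrix whose $i$-th entry is the positive scalar $1/\sum_{k\to i} w_k$, we have $W^{(2)} = E\,C^{(1)}\,D$. Since $E$ is an invertible diagonal matrix, the row space of $W^{(2)}$ equals the row space of $C^{(1)}D$; and a vector $v$ is in the row space of $C^{(1)}D$ exactly when $vD^{-1}$ is in the row space of $C^{(1)}$. Applying this with $v=(w_1,\dots,w_{L_1})$ gives $vD^{-1}=\vec 1$, so the network is ideal if and only if $\vec 1$ is in the row space of $C^{(1)}$ over $\R$.

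The second step is to upgrade ``over $\R$'' to ``$m\vec 1$ over $\Z$ for some nonzero $m\in\N$.'' One direction is immediate: if $m\vec 1$ is an integer combination of the rows of $C^{(1)}$ for some $m\neq 0$, then dividing by $m$ exhibits $\vec 1$ as a rational, hence real, combination. For the converse, suppose $\vec 1 = \sum_i q_i r_i$ where $r_i$ are the rows of $C^{(1)}$ and $q_i\in\R$. Because $C^{(1)}$ has integer (indeed $0/1$) entries, the solvability of the linear system $x^{\T} C^{(1)} = \vec 1^{\T}$ over $\R$ is equivalent to its solvability over $\Q$ (a consistent linear system with rational coefficients has a rational solution — e.g. by Gaussian elimination, which stays within $\Q$). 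So we may take all $q_i\in\Q$; clearing denominators with a common multiple $m\in\N$, $m\neq 0$, yields $m\vec 1 = \sum_i (m q_i) r_i$ with $mq_i\in\Z$, i.e. $m\vec 1$ is in the integer row space of $C^{(1)}$.

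Combining the two steps gives the corollary. I do not expect a serious obstacle here; the content is essentially bookkeeping plus the standard fact that a rational linear system has a rational solution when it has a real one. The one point that warrants care is the diagonal-rescaling argument of the first step: one must confirm that $E$ is genuinely invertible, which holds because each agent in layer $2$ receives at least one input (otherwise it would not appear in the network) and all $w_j = 1/\sigma_j^2$ are finite and nonzero by hypothesis, so each diagonal entry $1/\sum_{k\to i} w_k$ is a well-defined positive real. With that in hand the equivalence of row spaces is purely formal.
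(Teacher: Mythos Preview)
Your proposal is correct and follows essentially the same route as the paper: specialize Proposition~\ref{matrix_cond} to $n=3$, relate the row space of $W^{(2)}$ to that of $C^{(1)}$ via the diagonal rescalings (your factorization $W^{(2)}=E\,C^{(1)}\,D$ is exactly the paper's ``multiply each row by the common denominator'' step, stated more explicitly), and then pass from $\R$ to $\Z$ by noting that a rational linear system solvable over $\R$ is solvable over $\Q$ and clearing denominators. The only cosmetic difference is that you package the first step with invertible diagonal matrices while the paper writes out the coordinate equalities $w_j=\sum_i\beta_i w_j C^{(1)}_{ij}$; the content is identical.
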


The proof is straightforward and provided in~\ref{A2} for completeness. Note that the corollary is not restricted to the case where first-layer agents have equal variance measurements; whether the network is ideal or not depends entirely on the connection matrix $C^{(1)}$. 
The $i^{\text{th}}$ row of the matrix $C^{(1)}$ corresponds to the inputs of agent $a^{(2)}_i$,
and the sum of the $j^{\text{th}}$ column is the out-degree of agent $a^{(1)}_j$.
Therefore, Corollary~\ref{Cor1} is equivalent to the following: If each second-layer agent applies equal integer weights to all of its received estimates, then a three-layer network is ideal if and only if, for some choice of weights, the weighted out-degrees of all agents in the first layer are equal. Hence, we have the following special case:

\begin{cor}\label{cor2}
A three-layer network is ideal if all first-layer agents have equal out-degree in each connected component of the network restricted to the first two layers.
\end{cor}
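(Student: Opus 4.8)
The plan is to reduce the statement to the integer row-space criterion of Corollary~\ref{Cor1} and then write down an explicit integer vector of second-layer weights that realizes it. Recall that by Corollary~\ref{Cor1} a three-layer network is ideal precisely when some nonzero multiple $m\vec 1$ lies in the $\Z$-row space of $C^{(1)}$; equivalently, as noted right after that corollary, there exist integers $c_i$, one for each second-layer agent $a_i^{(2)}$, such that the $c_i$-weighted out-degree $\sum_{i:\, j\to i} c_i$ of every first-layer agent $a_j^{(1)}$ is the same constant $m$. So it suffices to produce such integer weights under the hypothesis on out-degrees.

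First I would set up the decomposition: let $G_1,\dots,G_r$ be the connected components of the bipartite graph obtained by restricting the network to its first two layers, and for each $G_t$ let $d_t\ge 1$ be the common out-degree of the first-layer agents lying in $G_t$ (this common value exists by hypothesis, and $d_t\ge 1$ because every agent forwards its estimate). Next, put $D=\operatorname{lcm}(d_1,\dots,d_r)$ and assign to every second-layer agent $a_i^{(2)}$ belonging to $G_t$ the integer weight $c_i = D/d_t$, which is a genuine integer since $d_t\mid D$.

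It then remains to verify the out-degree identity. Fix a first-layer agent $a_j^{(1)}$ and let $G_t$ be its component. Every out-edge of $a_j^{(1)}$ terminates at a second-layer agent in the same component $G_t$, and there are exactly $d_t$ of them, each carrying weight $c_i = D/d_t$; hence the weighted out-degree of $a_j^{(1)}$ equals $d_t\cdot (D/d_t)=D$, independently of $j$. Equivalently $\sum_i c_i\,(\text{$i$-th row of }C^{(1)}) = D\,\vec 1$, so $D\vec 1$ lies in the $\Z$-row space of $C^{(1)}$ with $D\in\N$ nonzero, and Corollary~\ref{Cor1} yields that the network is ideal.

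I do not expect a real obstacle here beyond bookkeeping. The two points that need a word of care are that an out-edge of a first-layer vertex cannot leave its connected component (immediate, but it is what forces all of $a_j^{(1)}$'s edges to receive the single weight $D/d_t$), and that the chosen weights are honest integers, which is why one passes to the least common multiple rather than to some rational common value. One should also flag the standing nondegeneracy assumption that each first-layer agent has positive out-degree: otherwise a component could be an isolated first-layer vertex whose measurement never reaches the final agent, and the conclusion would fail.
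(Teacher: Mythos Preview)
Your argument is correct and follows exactly the route the paper indicates: the paper does not give a formal proof of Corollary~\ref{cor2} but simply presents it as a ``special case'' of the weighted out-degree reformulation of Corollary~\ref{Cor1}, and your LCM construction is the natural way to make that implication precise when the common out-degree varies across components. Your closing caveat about isolated first-layer vertices is also appropriate, since the paper elsewhere assumes every agent communicates its estimate to the next layer.
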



%

In the connected network in Fig.~\ref{fig:basic}(a), the second agent in the first layer has greater out-degree than the others, while the agents in the first layer of the connected network in Fig.~\ref{fig:basic}(b) have equal out-degree.


Some row reduction operations can be interpreted graphically. Let  $g$ be the \textbf{input-map} which maps an agent, $a^{(2)}_i,$ to the subset of agents in the first layer that it receives estimates from.  If $g(a^{(2)}_i) \subseteq g(a^{(2)}_j)$ for some $i \neq j$, then some of the information received by  $a^{(2)}_j$ is redundant, as it is already contained in the estimate of agent $a^{(2)}_i$.  We can then reduce the network by eliminating the directed edges from $g(a^{(2)}_i)$ to $a^{(2)}_j$, so that in the reduced network $g(a^{(2)}_i) \cap g(a^{(2)}_j) = \emptyset$. This reduction  is equivalent to subtracting row $i$ from row $j$ of $C^{(1)}$ resulting in a connection matrix 
with the same row space. By Proposition~\ref{matrix_cond}, the reduced network  is ideal if and only if the original network is ideal. This motivates the following definition.

\begin{defn}
A three-layer network is said to be \textbf{reduced} if $g(a^{(2)}_i)$ is not a subset of $g(a^{(2)}_j)$ for all $1 \leq i \neq j \leq L_2$. 
\end{defn}


Reducing a network eliminates edges, and results in a simpler network structure. In a three-layer network, this will not affect the final estimate: Since reduction leaves the row space of $C^{(1)}$  unchanged, the final estimate in the reduced and unreduced network is the result of applying the same weights  to the first-layer estimates. This reduction procedure often simplifies identification of ideal networks to a counting of out-degrees (see Corollary~\ref{cor2}).

\paragraph{Example}
In Fig.~\ref{F:reduction}, we illustrate a two-step reduction of a network. In both steps, an agent (in yellow) has an input set which is overlapped by the input sets of some other agents (bolded). We use this to cancel the common inputs to the bolded agents and simplify the network. In the first step, note that the yellow agent receives input (in red) from a single first-layer agent. We use this to remove all of the other connections (in green) emanating from this first-layer agent. In the second step, we again see that the yellow agent receives input (red) that is overlapped by input to the agent next to it.  We can thus remove the redundant inputs (in green) to the bolded agent. The reduced network has 5 connected components all containing vertices with equal out-degree. Hence, this network is ideal by Corollary~\ref{cor2}.

\begin{figure}[h] 
\begin{center}
 \includegraphics[scale=.385]{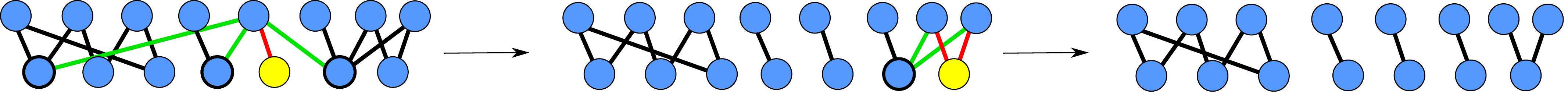}
 \end{center}
 \caption{Example of a two step network reduction.  It is difficult to tell whether the network on the left is ideal. However, after the reduction, 
 all first-layer agents in each of the five connected components have equal out-degree.  The network is therefore ideal. }
 \label{F:reduction}
\end{figure}

\subsection{Variance and Bias of the Final Estimate}
We next consider how the variance and bias of the estimate in layer $n$ depend on the network structure. By definition, the variance of the ideal estimate is $\text{Var}( \hs ) = \left( \sum_{i=1}^{L_1} w_i \right)^{-1}$. 
Therefore, as the size of the network increases, the final estimate in an ideal network is \emph{consistent}: As the number of measurements increases the final estimate converges in probability to the true value of $s$~\citep{kay}.   We next show that the final estimate in non-ideal networks is not necessarily consistent. We also show that biases of certain first-layer agents can have a disproportionate impact on the bias of the final estimate.

\begin{figure}[h]
\centering 
\includegraphics[scale=1]{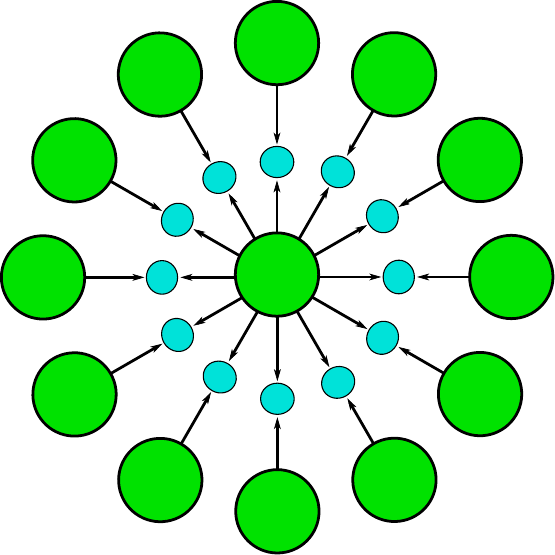}

\caption{Example of a network with an inconsistent final estimate. The green and blue nodes represent agents in the first and second layer, respectively. Each second-layer agent receives input from the common, central agent and a distinct first-layer agent.}
\label{Fig:Ring}
\end{figure}

\paragraph{Variance Maximizing Network Structure}
 Fig.~\ref{Fig:Ring} shows  an example of a  network structure for which the 
variance of the final estimate  converges to a positive number as the number of agents in the first layer increases.  
We assume that all first-layer agents make measurements with unit variance. We will show that as  the number of agents in both layers increases, the variance of the final estimate approaches $1/4$. If there are $n+1$ agents in the first layer and $n$ agents in the second layer, we have
$$C^{(1)} = \begin{pmatrix}
1 & 1 & 0 & \dots & 0 \\ 
1 & 0 & 1 & \dots & 0 \\
\vdots & \vdots & 0 & \ddots & 0 \\
1 & 0 & 0 & \dots & 1
\end{pmatrix}
,\;
W^{(2)} = \begin{pmatrix}
\frac 12 & \frac 12  & 0 & \dots & 0 \\ 
\frac 12  & 0 & \frac 12  & \dots & 0 \\
\vdots & \vdots & 0 & \ddots & 0 \\
\frac 12  & 0 & 0 & \dots & \frac 12 
\end{pmatrix}
,\;
\Omega = \begin{pmatrix}
2 & 1 & \dots & 1 \\
1 & 2 & \dots & 1 \\
\vdots & \vdots & \ddots & 1 \\
1 & 1 & \dots & 2 
\end{pmatrix} .
$$


The inverse of $\Omega$ can be explicitly computed. But note that $\Omega$ is a circulant matrix, and hence so is its inverse. This means that every row of the inverse adds to the same number, and so
$W^{(3)} = \begin{pmatrix}
\frac 1n , \dots, \frac 1n
\end{pmatrix}$.
This gives the estimate $\hat{s} = W^{(3)}W^{(2)} \hbs^{(1)} = 
\begin{pmatrix}
\frac 12, \frac 1{2n}, ...,  \frac 1{2n}
\end{pmatrix} \cdot  \hbs^{(1)}. $
Therefore, the estimate of the central agent (which communicates with all agents in the second layer) receives a much higher weight than all other estimates from the first layer. The variance of this estimate is equal to the sum of the squares of the weights,
$$
\text{Var}(\hat{s}) = \frac 14 +  \frac 1{4n}. 
$$
Hence, the final estimate is not consistent, as its variance remains positive as the number of first-layer agents diverges. Given a restriction on the number of second-layer agents, we show that this network leads to the highest possible variance of the final estimate:

\begin{prop} \label{prop:maxvar}
The final estimate in the network in Fig.~\ref{Fig:Ring} has the largest variance among all three-layer networks with a fixed number $n \geq 4$ of first-layer, and $m \geq n - 1$  second-layer agents, assuming that every first-layer agent makes at least one connection. 
\end{prop}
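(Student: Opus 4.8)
The plan is to reformulate $\var[\hat s]$ variationally, bound it from above by the variance of a single explicit estimator, and thereby reduce the statement to a combinatorial inequality about set systems.

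First I would simplify the network. By the reduction procedure described above, one may replace the given network by a \emph{reduced} three-layer network having the same $n$ first-layer agents (each still connected to the second layer) and the same final estimate — hence the same $\var[\hat s]$ — in which only the number of second-layer agents can decrease, to some $m'\le m$. So assume the network is reduced, with second-layer input sets $S_1,\dots,S_{m'}$ that are pairwise incomparable and cover $\{1,\dots,n\}$; put $d_i=|S_i|\ge1$ and let $\mathbf u_i=\tfrac1{d_i}\1_{S_i}\in\R^n$ be the weight vector that second-layer agent $i$ applies to the first-layer estimates (all of unit variance). For a three-layer network Proposition~\ref{theorem1} gives $\var[\hat s]=\min\{\|\mathbf v\|^2:\mathbf v=\sum_i\beta_i\mathbf u_i,\ \sum_i\beta_i=1\}$, the squared distance from the origin to the affine hull of the $\mathbf u_i$. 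In particular the final agent does no worse than the estimator that weights all $m'$ second-layer agents equally, so writing $\bar{\mathbf u}=\tfrac1{m'}\sum_i\mathbf u_i$ and $\mu_j=\sum_{i:\,j\in S_i}\tfrac1{d_i}$ (so $\bar{\mathbf u}_j=\mu_j/m'$ and $\sum_j\mu_j=m'$) one gets
\[
\var[\hat s]\ \le\ \|\bar{\mathbf u}\|^2\ =\ \frac1{(m')^2}\sum_{j=1}^n\mu_j^2 .
\]
The ring network of Fig.~\ref{Fig:Ring}, built on $n$ first-layer agents, uses $n-1$ second-layer agents and (the computation above gives $\tfrac14+\tfrac1{4(n-1)}$ for $n$ first-layer agents) has $\var[\hat s]=\tfrac n{4(n-1)}$; moreover for it the equal-weight estimator is already optimal, so no slack is lost at the extremal point. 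Hence it suffices to prove the combinatorial bound $\sum_{j=1}^n\mu_j^2\le\tfrac n{4(n-1)}(m')^2$ for every reduced family $S_1,\dots,S_{m'}$ covering $\{1,\dots,n\}$ with $n\ge4$.

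To prove that bound I would first discard large sets: $\var[\hat s]\le\|\mathbf u_i\|^2=1/d_i$ (put all weight on agent $i$), so if any $d_i\ge4$ we are done; assume $d_i\le3$ for all $i$. Next, split the first-layer agents into those lying in a singleton set $\{j\}$ — by reducedness such a $j$ lies in no other set, so $\mu_j=1$ — and the rest, each of whose containing sets has size $2$ or $3$, giving $\mu_j\le\tfrac12\cdot\#\{i:j\in S_i\}$. Using $\sum_j\mu_j=m'$, the identity $\sum_j\mu_j^2=\sum_i\tfrac1{d_i}+\sum_{i\ne k}\tfrac{|S_i\cap S_k|}{d_id_k}$, and the antichain (reducedness) and covering constraints, I would argue that $\sum_j\mu_j^2/(m')^2$ is maximized by a ``sunflower'' in which one first-layer agent lies in every set, every set has size $2$, and $m'=n-1$ — that is, by the ring — probably by showing that any maximizing family must share a common center and then optimizing over the number of rays. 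This extremal step is the main obstacle: one has to balance the fact that concentrating the $\mu_j$ (which inflates $\sum_j\mu_j^2$) forces the sets to overlap heavily against the $(m')^2$ in the denominator, which penalizes using many sets. It is also where the hypotheses are needed: ``every first-layer agent is connected'' is essential, since otherwise one could delete an unused first-layer agent and compare against the ring on $n-1$ vertices, whose variance $\tfrac{n-1}{4(n-2)}$ is strictly larger; $m\ge n-1$ guarantees the ring itself lies in the comparison class; and $n\ge4$ enters in bounding the small-size cases.

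In summary, the reduction, the variational description of $\var[\hat s]$, and the equal-weight upper bound are routine, and the real content is the combinatorial inequality $\sum_j\mu_j^2\le\tfrac n{4(n-1)}(m')^2$, which I expect to require an extremal/antichain argument rather than a one-line estimate.
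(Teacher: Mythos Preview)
Your overall architecture --- bound $\var[\hat s]$ from above by the variance of a single explicit convex combination of the second-layer estimates, then show that this upper bound is itself maximized by the ring --- is exactly the strategy the paper uses. The gap is in your choice of comparison estimator. By taking equal weights $\beta_i=1/m'$ you land on the quantity $\sum_j\mu_j^2/(m')^2$, which depends on the entire incidence structure of the reduced family $S_1,\dots,S_{m'}$; you then have to prove an extremal inequality over antichains of subsets, and you correctly identify that you do not have a proof of it. The preliminary reduction step also does not help you: it can push $m'$ strictly below $n-1$, so the inequality you need must hold over a larger class than the hypothesis $m\ge n-1$ suggests, and the antichain constraint you gain in exchange is not obviously strong enough.

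The paper sidesteps all of this with a better comparison estimator: weight second-layer agent $i$ by its \emph{in-degree} $|S_i|$ rather than by $1/m'$. Then $\sum_i|S_i|\,\hat s_i^{(2)}=\sum_i\sum_{j\in S_i}\hat s_j^{(1)}=\sum_j d_j\,\hat s_j^{(1)}$, where $d_j$ is the \emph{out-degree of first-layer agent $j$}. After normalization the induced first-layer weights are $d_j/\sum_k d_k$, so the variance of this ``na\"ive'' estimator is $V(d_1,\dots,d_n)=\sum_j d_j^2/(\sum_k d_k)^2$, a function of the out-degree vector alone. For the ring this na\"ive estimator coincides with the optimal one, so it suffices to maximize $V$ over $[1,m]^n$. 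That is a clean calculus problem: the only interior critical points are $d=k\vec 1$ (minima), the same holds on each face, so the maximum sits at a corner, and among corners $(m,1,\dots,1)$ wins (this last comparison is where $m\ge n-1$ is used). Finally one checks that the only admissible network realizing $(m,1,\dots,1)$ is the ring.

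In short: your plan is right in outline but stalls because the equal-weight bound does not decouple from the combinatorics. Replacing equal weights by in-degree weights collapses the problem to the out-degree sequence and turns the ``main obstacle'' into a routine optimization; no reduction of the network is needed.
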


The idea  of the proof is to limit the possible out-degrees of the agents in the first layer and show that the structure in Fig.~\ref{Fig:Ring} has the highest variance for this restriction.
The proof is provided in~\ref{A:maxvar}.

In general, we conjecture that for the final estimate to have large variance, some agents upstream must have a disproportionately large out-degree, with the remaining agents making few connections.  On the other hand, as the in-degree of a second-layer agent increases, the variance of its estimate shrinks. 
Thus when a few agents communicate information to many, 
the resulting redundancy  is difficult to resolve downstream.  But when downstream agents receive many estimates, we expect the estimates to be good. We next show that the biases of the agents with the highest out-degrees can have an outsized influence on the estimates downstream.

%
\paragraph{Propagation of Biases}  We next ask whether the bias of the final estimate can 
remain finite in the limit of infinitely many measurements. We assume constant, additive biases, $ \hat{s}_i^{(1)} =  x_i + b_i, $ with the constant bias, $b_i,$ unknown to  agents downstream.  Since all 
estimates in the network are convex linear combinations of first-layer measurements, the final estimate will have the form
\begin{equation} \label{eqn:bias}
\hat{s} = \sum \alpha_i \left( x_i + b_i \right) = \sum \alpha_i  x_i  + \sum \alpha_i  b_i,
\end{equation} 
 and thus will have finite bias bounded by the maximum of the individual biases. 

We have provided examples of network structures where the estimate of a first-layer agent was given higher weight than others, even when all first-layer measurements had equal variance. Eq.~\eqref{eqn:bias} shows that this agent's  bias will also be disproportionately represented in the bias of the final estimate.  Indeed, in the example in Fig.~\ref{fig:basic}(a), the estimate of second agent in first layer has weight $\frac 12$, and its bias will have twice the weight of the other agents in the final estimate. Similarly, the bias of the central agent in Fig.~\ref{Fig:Ring} will account for half the bias of the final estimate as $n \to \infty$.  Thus even if
the biases, $b_i$, are distributed randomly with zero mean, the asymptotic bias of the final estimate does not always disappear as
the number of measurements increases.

More generally, networks that contain W-motifs  can result in biases of first-layer agents with disproportionate impact on the final estimate. As with the variance, we
conjecture that the bias of agents that communicate their estimates to many agents downstream will be disproportionately represented in the final estimate.  Equivalently, if the network contains  agents that receive many estimates, we expect the bias of the final estimate to be reduced.

\subsection{Inference in random feedforward networks}

We have shown that networks with specific structures can lead to inconsistent and asymptotically biased final estimates. We now consider networks with randomly and independently chosen connections between layers. Such networks are likely to contain many W-motifs, but it is unclear whether these motifs are resolved and whether the final estimate is ideal. We will use results of random matrix theory to  show that there is a sharp transition in the probability that a network is ideal when the number of agents from one layer exceeds that of the previous layer~\citep{bollobas}.  

We assume that connections between agents in different layers are random, independent and made with fixed probability, $p$. We will use the following result of~\cite{Komlos68}, also discussed by \cite{bollobas}:

\begin{theorem}[Komlos] \label{komThm}
Let $\xi_{ij}$, $i,j=1, \ldots, n$ be i.i.d. with non-degenerate distribution function $F(x)$.  Then 
the probability that the matrix $X = (\xi_{ij}) $ is singular converges to 0 with the size of the
matrix,
$$ \lim_{n \to \infty} P( \det X = 0 ) = 0.
$$
%
\end{theorem}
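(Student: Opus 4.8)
The plan is to reproduce Koml\'os's argument, whose engine is a combination of anti-concentration for linear forms in the i.i.d.\ entries of $X$ and the fact that a random row is exponentially (in the codimension) unlikely to land in a fixed subspace. Since $X$ is singular exactly when its rows $R_1,\dots,R_n\in\R^n$ are linearly dependent, it suffices to bound the probability of such a dependency. Three facts will be used. (a)~Because $F$ is non-degenerate, $q:=\sup_{x\in\R}P(\xi_{11}=x)<1$. (b)~If $V\subseteq\R^n$ has codimension $j$, then $P(R\in V)\le q^{\,j}$ for any row $R$; this follows by writing $V=\ker B$ with $B$ of rank $j$, choosing $j$ columns of $B$ that form an invertible block, and conditioning on the remaining $n-j$ coordinates of $R$. (c)~By the Erd\H{o}s--Littlewood--Offord inequality, applied after symmetrizing $F$, there is a function $\phi$ with $\phi(m)=O(m^{-c_0})$ for some constant $c_0=c_0(F)>0$ such that $P(\langle a,R\rangle=0)\le\phi(m)$ whenever $a\in\R^n$ has at least $m$ nonzero entries.

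Write $V_k=\mathrm{span}(R_1,\dots,R_k)$ (with $V_0=\{0\}$) and let $T$ be the first index $k$ with $R_k\in V_{k-1}$, so $X$ is singular iff $T\le n$. Fix a slowly growing cutoff $\ell=\ell(n)\to\infty$, say $\ell=\lceil\log n\rceil$. For the early indices $k\le n-\ell$: on $\{T=k\}$ the space $V_{k-1}$ has codimension $n-k+1>\ell$, so fact (b) gives $P(T=k)\le q^{\,n-k+1}$, and summing over these $k$ contributes at most $\sum_{j>\ell}q^{\,j}=q^{\ell+1}/(1-q)=o(1)$. What remains is to handle the $\le\ell$ late indices $k\in(n-\ell,n]$, where $V_{k-1}$ has small codimension and fact (b) alone gives only a constant bound.

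This is the crux, and the step I expect to be the main obstacle: one must show that for a late $k$ the orthogonal complement $V_{k-1}^\perp$ contains, with high probability, a vector with a growing number of nonzero coordinates. If $V_{k-1}^\perp$ contained a nonzero vector supported on a set $S$ with $|S|=t$, then the $t$ columns indexed by $S$ of the $(k-1)\times n$ matrix with rows $R_1,\dots,R_{k-1}$ would be linearly dependent; conditioning on the first $t$ of those rows and applying fact (b) to the remaining $k-1-t$ rows inside $\R^t$ bounds this by $q^{\,k-1-t}$. A union bound over the $n^{O(g)}$ subsets $S$ of size at most $g$, with $g=g(n)$ chosen so that $g\log n=o\big(n\log(1/q)\big)$ (for instance $g=\lfloor n/\log^2 n\rfloor$), then shows that, since $k-1\ge n-\ell\gg g$, with probability $1-o(1)$ every nonzero vector of $V_{k-1}^\perp$ has more than $g$ nonzero coordinates.

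On this event one chooses $u\in V_{k-1}^\perp\setminus\{0\}$ (measurably in $R_1,\dots,R_{k-1}$), uses $\{R_k\in V_{k-1}\}\subseteq\{\langle u,R_k\rangle=0\}$, and invokes fact (c) to get $P(R_k\in V_{k-1}\mid R_1,\dots,R_{k-1})\le\phi(g)$, so that $P(T=k)=\phi(g)+o(1)$ for each late $k$; summing over the $\le\ell$ late indices still gives $O\big(\ell\,\phi(g)\big)+o(1)=o(1)$, the cutoffs being chosen so that $\ell=o(g^{c_0})$. Combining the early and late contributions gives $P(\det X=0)\to0$; the non-degeneracy of $F$ enters only through $q<1$ and $\phi(m)\to0$. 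See \citep{bollobas} for a complete treatment.
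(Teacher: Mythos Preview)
The paper does not prove this theorem at all: it is quoted as a classical result of Koml\'os (with a pointer to \cite{Komlos68} and \cite{bollobas}) and then invoked as a black box in the proof of Corollary~\ref{rand3}. So there is no ``paper's own proof'' to compare against; any correct argument you supply goes strictly beyond what the paper does.

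That said, your sketch follows the standard Koml\'os route (rows, first-hitting time $T$, split into high- and low-codimension regimes, anti-concentration for the low-codimension case) and is broadly sound. One step is stated a bit loosely: in the late-index analysis you claim that the probability that the $t$ columns indexed by $S$ are dependent is at most $q^{\,k-1-t}$, obtained by ``conditioning on the first $t$ of those rows and applying fact~(b) to the remaining $k-1-t$ rows inside $\R^t$.'' As written this does not quite give $q^{\,k-1-t}$: after conditioning on the first $t$ restricted rows, their span in $\R^t$ may already be all of $\R^t$, in which case there is nothing to bound, but if it is a proper subspace the remaining rows can still enlarge the span, so ``each remaining row lies in a fixed subspace of codimension $\ge 1$'' is not the right event. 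A clean way to get an adequate bound is to note that if the $k-1$ restricted rows fail to span $\R^t$ then at least $k-t$ of the incremental steps $j\mapsto j+1$ satisfy $R_j|_S\in W_{j-1}$ with $\dim W_{j-1}<t$, and each such step has conditional probability at most $q$; a short stochastic-domination or stopping-time argument then yields a bound of order $\binom{k-1}{t-1}q^{\,k-t}$ (or one simply cites the corresponding lemma in \cite{bollobas}). Either form is more than enough to beat the $n^{O(g)}$ union bound with your choice $g=\lfloor n/\log^2 n\rfloor$, so the overall strategy is intact; just tighten the justification of that one inequality if you want the sketch to stand on its own.
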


\begin{cor} \label{rand3}
For a three-layer network with independent, random, equally probable ($p = 1/2$) connections from first to second-layer, as the number of agents $L_1$ and $L_2$ increases,
$$\frac{L_1}{L_2} \leq 1 \implies P( \hs = \hat{s}_\text{ideal} ) \to 1,$$ and  
$$\frac{L_1}{L_2} > 1 \implies P( \hs = \hat{s}_\text{ideal}) \to 0.$$
\end{cor}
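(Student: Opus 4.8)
The plan is to reduce the corollary, via Corollary~\ref{Cor1}, to a statement about a single random $0/1$ matrix, and then to treat the two regimes separately, using Komlós's theorem (Theorem~\ref{komThm}) for the easy regime and an anti-concentration estimate for the hard one. Since $n=3$ here, Corollary~\ref{Cor1} says the network is ideal, i.e.\ $\hs = \hs_\text{ideal}$, precisely when $m\vec{1}=m(1,\dots,1)$ lies in the row space of $C^{(1)}$ over $\Z$ for some nonzero $m\in\N$; clearing denominators, this is equivalent to $\vec{1}\in\R^{L_1}$ lying in the row space of $C^{(1)}$ over $\R$. With $p=1/2$, $C^{(1)}$ is an $L_2\times L_1$ matrix of i.i.d.\ Bernoulli$(1/2)$ entries, so every square sub-block of it is a random matrix of the type covered by Theorem~\ref{komThm}. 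Thus it suffices to prove, as $L_1,L_2\to\infty$: if $L_1\le L_2$ then $P(\vec{1}\in\text{row space of }C^{(1)})\to 1$, and if $L_1>L_2$ then $P(\vec{1}\in\text{row space of }C^{(1)})\to 0$.

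The regime $L_1\le L_2$ is immediate. The $L_1\times L_1$ sub-block of $C^{(1)}$ consisting of its first $L_1$ rows has i.i.d.\ Bernoulli$(1/2)$ entries, hence is nonsingular with probability $\to 1$ by Theorem~\ref{komThm}. On that event $C^{(1)}$ has full column rank $L_1$, so its row space is all of $\R^{L_1}$ and in particular contains $\vec{1}$; hence the network is ideal with probability tending to $1$.

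For $L_1>L_2$, write $k=L_1-L_2\ge 1$ and condition on the $L_2\times L_2$ sub-block $C_0$ formed by the first $L_2$ columns of $C^{(1)}$; by Theorem~\ref{komThm}, $C_0$ is invertible with probability $\to 1$. On that event, $\vec{1}$ lying in the row space of $C^{(1)}$ forces the vector of combining weights to equal the unique vector $u^{\T}:=\vec{1}^{\T}C_0^{-1}$ solving $\beta^{\T}C_0=\vec{1}^{\T}$, and then requires in addition that $u^{\T}c=1$ for each of the $k$ remaining columns $c$ of $C^{(1)}$; these columns are i.i.d.\ Bernoulli$(1/2)$ and independent of $C_0$. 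Since $u\neq 0$, fixing a coordinate $j$ with $u_j\neq 0$ and conditioning on the other coordinates of $c$ forces $c_j$ to a single value, so $P(u^{\T}c=1\mid C_0)\le 1/2$, and hence $P(\vec{1}\in\text{row space of }C^{(1)})\le P(C_0\text{ singular})+2^{-k}$. When $L_1-L_2\to\infty$ this already proves the claim. In the remaining near-critical case, in which $k$ stays bounded, one must upgrade $P(u^{\T}c=1\mid C_0)\le 1/2$ to $o(1)$: by the Erd\H{o}s--Littlewood--Offord inequality, $P(u^{\T}c=1\mid C_0)=O(s^{-1/2})$, where $s$ is the number of nonzero coordinates of $u=\vec{1}^{\T}C_0^{-1}$, so it suffices to show $s\to\infty$ with probability $\to 1$. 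Here $u_i=0$ exactly when the matrix obtained from $C_0$ by replacing its $i$-th row with $\vec{1}^{\T}$ is singular; if such a matrix is singular with probability $o(1)$, then a union bound gives $\mathbb{E}\,\#\{i:u_i=0\}=o(L_2)$, so with probability $\to 1$ all but a vanishing fraction of the $u_i$ are nonzero and $s\to\infty$.

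The main obstacle is precisely this last input: controlling $\vec{1}^{\T}C_0^{-1}$, equivalently showing that replacing one row of a random Bernoulli matrix by the all-ones row still leaves it nonsingular with probability $\to 1$. An elementary argument (each such determinant is an affine function of a single Bernoulli entry) only yields the bound $1/2$, so this step needs genuine random-matrix input beyond the i.i.d.\ version of Komlós's theorem quoted above -- either a singularity estimate for Bernoulli matrices that is stable under a rank-one deterministic modification, or the quantitative small-ball / Littlewood--Offord machinery controlling anti-concentration of the linear form $\vec{1}^{\T}C_0^{-1}c$. Everything else in the argument is routine linear algebra together with Theorem~\ref{komThm}.
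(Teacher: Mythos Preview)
Your treatment of the regime $L_1\le L_2$ is correct and matches the paper exactly.

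For $L_1>L_2$ you have taken a harder road than necessary and, as you yourself note, left a gap. The paper's argument is short and self-contained: since adding rows can only enlarge the row space, one may assume $L_2=L_1-1=:n-1$, and then
\[
P\bigl(\vec{1}\in\mathcal{R}(C)\bigr)\ \le\ P\Bigl(\det\begin{pmatrix}\vec{1}\\ C\end{pmatrix}=0\Bigr).
\]
Write $C=(B\ \mathbf{v})$ with $\mathbf{v}$ the last column. From every row with $v_i=1$ subtract the top all-ones row and multiply by $-1$. Because $p=1/2$, the map $b\mapsto 1-b$ preserves the Bernoulli$(1/2)$ law, so (conditionally on $\mathbf{v}$) the resulting block $\tilde{B}$ is again an $(n-1)\times(n-1)$ i.i.d.\ Bernoulli$(1/2)$ matrix; the last column is now $(1,0,\dots,0)^{\T}$, and expanding along it gives $\det\begin{pmatrix}\vec{1}\\ C\end{pmatrix}=\pm\det\tilde{B}$. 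A second application of Komlós's theorem then finishes the proof; no Littlewood--Offord machinery is needed.

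Your conditioning-on-$C_0$ argument is fine when $L_1-L_2\to\infty$, but in the near-critical case you correctly isolate the crux: replacing one row of a random Bernoulli square matrix by $\vec{1}$ should leave it nonsingular with probability $1-o(1)$. You flag this as needing input beyond Theorem~\ref{komThm}, but in fact the paper's row-operation trick above proves exactly this statement (after a row permutation such a matrix is $\begin{pmatrix}\vec{1}\\ C'\end{pmatrix}$ with $C'$ of size $(L_2-1)\times L_2$). So your gap can be closed elementarily---but once one has that trick in hand, the conditioning, the Erd\H{o}s--Littlewood--Offord bound, and the union bound over coordinates of $u$ all become unnecessary, since the trick already yields the full result in one step. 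The one thing your route buys is a plausible path toward general $p\in(0,1)$, where the symmetry $b\mapsto 1-b$ fails and the paper's argument breaks down; the paper only conjectures the $p\neq 1/2$ case.
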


\begin{figure}
 \includegraphics[width = 6.4 in]{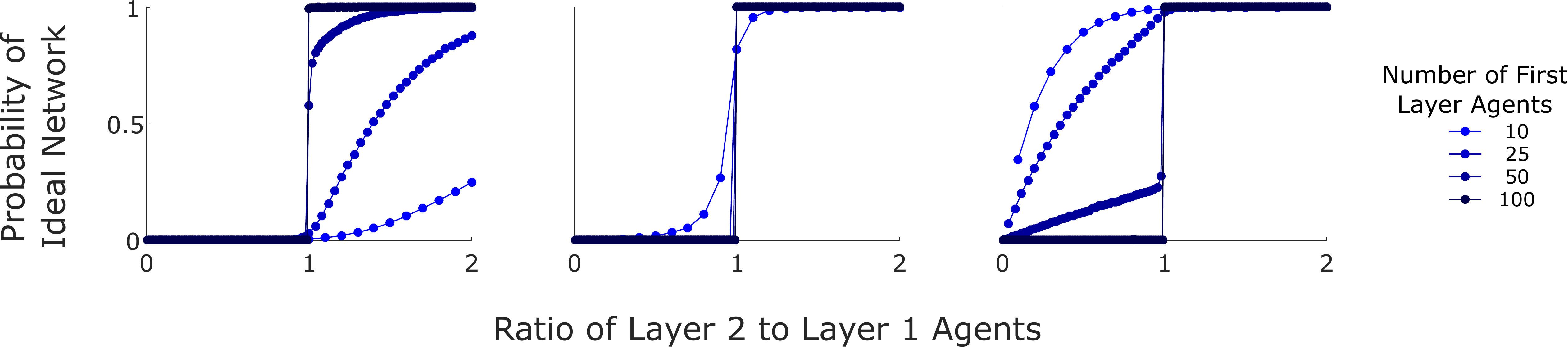}
\caption{The probability that a random, three-layer network is ideal for connection probabilities $p  =$ 0.1 (left), 0.5 (center) , and 0.9 (right). In each panel, the different curves correspond to different, but fixed numbers of agents in the first layer.  The number of agents in the  second layer is varied.  There is a sharp transition in the probability that a network is ideal when  the number of agents in the the second layer exceeds the number in the first. } 
\label{RandNetSims}
\end{figure}

The proof is given in~\ref{A:rand}. The same proof works when $L_1/L_2 \leq 1$ and the probability of a connection is arbitrary, $p \in (0,1]$. 
We conjecture that the result also holds for $L_1/L_2 > 1$ and arbitrary $p$,  but the present proof relies on the assumption that $p = 1/2$.
Fig.~\ref{RandNetSims} shows  the results of simulations which support this conjecture: The different panels correspond to different connection probabilities, and the curves to different numbers of agents in the first layer. As the number of agents in the second layer exceeds that in the first, the probability that the network is ideal approaches 1 as the number first-layer agents increases. With 100 agents in the first layer, the curve is approximately a step function for all connection probabilities we tested.

\paragraph{More than 3 Layers} We conjecture that a similar result holds for networks with more than three layers:

\begin{conj} \
 For a network with $n$ layers with  independent, random, equally probable connections between consecutive layers, as the total number of agents increases,
 $$L_k \leq L_{k+1} \text{ for } 1 \leq k < n-1 \implies P( \hs = \hat{s}_\text{ideal} ) \to 1$$ and
 $$L_1 > L_k \text{ for some } 1 < k < n \implies  P( \hs = \hat{s}_\text{ideal} ) \to 0.$$
\end{conj}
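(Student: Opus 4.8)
The plan is to recast the question in terms of a decreasing chain of subspaces and then prove the two implications separately, bootstrapping from Corollary~\ref{rand3}. For $2\le j\le n-1$ put $A_j:=W^{(j)}W^{(j-1)}\cdots W^{(2)}$ and $A_1:=I$, so that $\hbs^{(j)}=A_j\hbs^{(1)}$ and $M:=A_{n-1}$ is the matrix appearing in Proposition~\ref{matrix_cond}; the edge-reduction convention leaves each $\hbs^{(j)}$, hence each $A_j$, unchanged, so it is immaterial at this level. Let $V_j\subseteq\R^{L_1}$ be the row space of $A_j$. Since $A_{j+1}=W^{(j+1)}A_j$ we have $\R^{L_1}=V_1\supseteq V_2\supseteq\cdots\supseteq V_{n-1}$ with $\dim V_j\le L_j$, and by Proposition~\ref{matrix_cond} (in the equal-variance form, Corollary~\ref{Cor1}) the network is ideal iff $\vec{1}\in V_{n-1}$. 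Because the chain only shrinks, it suffices to establish: (a) if $L_1\le L_2\le\cdots\le L_{n-1}$ then $V_{n-1}=\R^{L_1}$ with probability $\to1$; and (b) if $L_k<L_1$ for some $1<k<n$, then already $\vec{1}\notin V_{k}$ with probability $\to1$ for the smallest such $k$, whence $\vec{1}\notin V_{n-1}\subseteq V_k$.

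Both parts rest on a single rank-propagation lemma that I would prove first: \emph{if $\mathrm{rank}(A_j)=L_1$ and layer $j+1$ has $L_{j+1}\ge L_1$ agents joined to layer $j$ by independent edges, then $\mathrm{rank}(A_{j+1})=L_1$ with probability $\to1$.} Iterating it from the base case $\mathrm{rank}(A_2)=\mathrm{rank}(C^{(1)})=L_1$ --- which holds with probability $\to1$ when $L_1\le L_2$ by Theorem~\ref{komThm} applied to an $L_1\times L_1$ submatrix of the random $0/1$ matrix $C^{(1)}$, since a Bernoulli$(1/2)$ variable is non-degenerate, after discarding the (rank-irrelevant) isolated layer-two agents and a positive diagonal rescaling --- immediately yields (a), because $\dim V_{n-1}=L_1$ forces $V_{n-1}=\R^{L_1}\ni\vec{1}$. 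For (b), let $k^{*}$ be the smallest index $>1$ with $L_{k^{*}}<L_1$. If $k^{*}=2$, then $V_2$ is the row space of $C^{(1)}$ with $L_2<L_1$, and the hard half of Corollary~\ref{rand3} (proved in the appendix for $p=1/2$) gives $P(\vec{1}\in V_2)\to0$. If $k^{*}\ge3$, then $L_j\ge L_1$ for $2\le j\le k^{*}-1$, so the lemma gives $\mathrm{rank}(A_{k^{*}-1})=L_1$ with probability $\to1$; writing $U:=\mathrm{Im}(A_{k^{*}-1})$, an $L_1$-dimensional subspace of $\R^{L_{k^{*}-1}}$, one has $V_{k^{*}}=A_{k^{*}-1}^{\mathrm T}\big(\text{rowspace}(W^{(k^{*})})\big)$, so $\vec{1}\in V_{k^{*}}$ precisely when the row space of $W^{(k^{*})}$ (dimension $\le L_{k^{*}}$) meets the affine set $(A_{k^{*}-1}^{\mathrm T})^{-1}(\vec{1})$ (dimension $L_{k^{*}-1}-L_1$) inside $\R^{L_{k^{*}-1}}$. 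Since $L_{k^{*}}+(L_{k^{*}-1}-L_1)<L_{k^{*}-1}$, a generic pair of such subspaces is disjoint, and the task is to show disjointness holds with probability $\to1$ for the realized $W^{(k^{*})}$.

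To prove the rank-propagation lemma one must control $W^{(j+1)}$ for $j\ge2$. Its entries are the generalized-least-squares weights $\1^{\mathrm T}(R_i^{(j)})^{-1}/\big(\1^{\mathrm T}(R_i^{(j)})^{-1}\1\big)$ built from the principal submatrices $R_i^{(j)}$ of $\Omega^{(j)}$, so they are correlated rational functions of all of $C^{(1)},\dots,C^{(j)}$ and Komlós' theorem cannot be quoted directly. The route I would take is: (i) show that the relevant $L_1\times L_1$ minor of $A_{j+1}$, obtained by pairing $L_1$ rows of $W^{(j+1)}$ with a basis of $U$, is a rational function of the entries of $\Omega^{(j)}$ and of the connection pattern that is \emph{not identically zero} --- a finite structural computation, using that the random bipartite graph from layer $j$ to layer $j+1$ contains, with probability $\to1$, enough disjoint ``generic'' choices of in-neighbourhoods to witness this; and (ii) show that the realised $\Omega^{(j)}$ avoids the zero set of that function with probability $\to1$, i.e.\ a quantitative anti-concentration statement for the GLS weights as functions of the underlying Bernoulli variables. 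The same two ingredients supply the disjointness needed in (b).

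The hard part is exactly (i)--(ii): porting the Komlós phenomenon from $0/1$ matrices to the GLS-weighted matrices of layers $3,4,\dots$, whose entries are algebraically entangled with all the upstream randomness and whose supports are further altered by the edge-reduction convention. I expect one can exploit the abundance of near-independent local configurations in a dense random graph to manufacture the required non-degeneracy, and that the same argument will also remove the $p=1/2$ restriction inherited from Corollary~\ref{rand3}; but making this rigorous --- in particular ruling out rare algebraic coincidences among the correlated minors --- is the principal obstacle.
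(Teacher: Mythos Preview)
The statement you are attempting is labelled a \emph{conjecture} in the paper, and the authors give no proof: they offer only the simulation evidence of Fig.~\ref{multiRandSims} and the remark that ``the entries in these matrices are no longer independent, so classical results of random matrix theory no longer apply.'' There is therefore no paper proof to compare your proposal against.

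That said, your framing is sound and matches what the authors themselves sketch informally. The chain $V_1\supseteq V_2\supseteq\cdots\supseteq V_{n-1}$ of row spaces, the reduction of part~(a) to a rank--propagation lemma with the Koml\'os base case at layer~$2$, and the reduction of part~(b) to the $k^{*}=2$ instance of Corollary~\ref{rand3} plus a generic--position argument for $k^{*}\ge3$, are all correct in outline; your dimension count $L_{k^{*}}+(L_{k^{*}-1}-L_1)<L_{k^{*}-1}$ is exactly the inequality $L_{k^{*}}<L_1$ and so is the right heuristic. The substantive gap is precisely the one you flag: steps~(i)--(ii), i.e.\ showing that the GLS--weighted matrices $W^{(j)}$ for $j\ge3$ --- whose entries are correlated rational functions of all upstream Bernoulli variables, and whose supports are perturbed by the edge--reduction convention --- still enjoy a Koml\'os--type nonsingularity and the requisite anti--concentration. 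This is exactly the obstacle the paper names as the reason the statement remains conjectural, and your proposal does not resolve it; it is a reasonable research program rather than a proof.
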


Fig.~\ref{multiRandSims} shows the results with four-layer networks with different connection probabilities across layers. The number of agents in the first and second layers are equal, and we varied the number of agents in the third layer.  The results support our conjecture.

With multiple layers, if $L_1 > L_2$ then the network will not be ideal as in the limit the estimate of $s$ will not be ideal already in the second layer by Corollary~\ref{rand3}. If the number of agents does not decrease across layers, we conjecture that the probability that information is lost across layers is small when the number of agents is large. Indeed, it seems reasonable that the products of the random weight matrices will be full rank with increasing probability allowing us to apply Proposition~\ref{matrix_cond}. However, the entries in these matrices are no longer independent, so classical results of random matrix theory no longer apply.
  
\begin{figure}[h]
\includegraphics[width = 6.4 in]{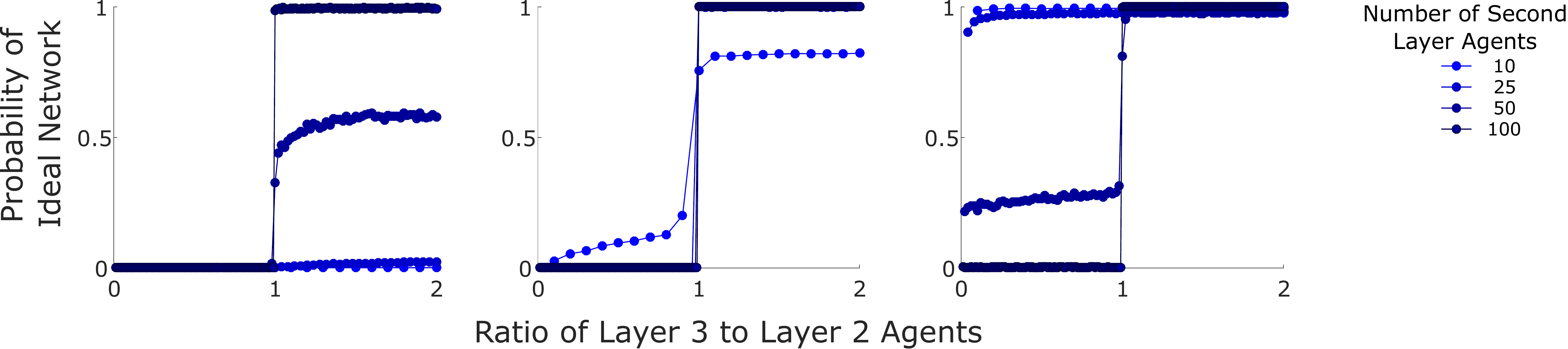}
\caption{The probability that a random, four-layer network is ideal for connection probabilities $p  =$ 0.1 (left), 0.5 (center) , and 0.9 (right). Each curve corresponds to equal, fixed numbers of agents in the first two layers, with a changing number of agents in the third layer. 
}
\label{multiRandSims}
\end{figure}


%
%


\section{Conclusion}\label{sec:conclusion}

We examined how information about the world propagates through layers of agents.  We assumed that at each step,
a group of agents makes an inference about the state  of the world from information provided by their predecessors.
The setup is related, but  different from information cascades where a chain of rational agents make decisions in
turn \citep{banerjee1992, kleinberg, welch, bharat}, or recurrent networks where agents exchange information iteratively~\citep{mossel2014}.

We translated the question about whether the estimate of the state of the world degrades across layers in the network to a  simple algebraic condition. This allowed us to use results of random matrix theory in the case of random  networks, find equivalent networks through an intuitive reduction process, and identify a class of networks in which estimates do not degrade across layers, and another class in which degradation is maximal.

Networks in which estimates degrade across layers must contain a W-motif. This motif introduces redundancies in the
information that is communicated downstream and may not be removed.  Such redundancies, also known as ``bad correlations,'' are known to limit the information that can be decoded from neural responses~\citep{MorenoBote:2014,Bhardwaj:2015}.   This  suggests that agents with 
 large out-degrees and small in-degrees can hinder the propagation of information, as they introduce redundant information 
in the network.  On the other hand, agents with
large in-degrees integrate information from many sources, which can help improve the final estimate.  However, the detailed
structure of a network is important:  For example, an agent with large in-degree in the second layer can have a large out-degree without hindering the propagation of information as it 
has already integrated most available first-layer measurements.

To make the problem tractable, we have made a number of simplifying assumptions. We made the strong assumption that agents have full knowledge of the network structure. Some agents may have to make several calculations in order to make an estimate, so we also do not assume bounded rationality \citep{goyal98}. This is unlikely to hold in 
realistic situations. Even when making simple decisions, pairs of agents are not always rational~\citep{Bahrami2010}: When two  agents each make a measurement with different variance, exchanging information can degrade the better estimate. 

The assumption that only agents in the first layer make a measurement is not crucial.  We can obtain similar results if
all agents in the network make independent measurements, and the information is propagated directionally, as we assume here.
However, in such cases, the confidence (inverse variance of the estimates) typically becomes unbounded across layers.

\section{Acknowledgments}

Funding: This research was supported by NSF-DMS-1517629 (SS and KJ),  NSF/NIGMS-R01GM104974 (KJ), NSF-DMR-1507371 (KB), and NSF-IOS-1546858 (KB). 

\begin{appendix}

\section{Proof of Theorem~\ref{theorem1}} \label{Appendix}

We start with the simpler case of a W-motif between the first two layers and then extend it to the general case. We begin with definitions that will be used in the proof. 


Let $g$ be the \textbf{input-map} which maps an agent to the subset of agents in the first layer that it receives information from (through some path). That is, $g( a_i^{(j)})$ is the set of agents in the first layer that provide input to $a_i^{(j)}$. It is intuitive -- and we  show it formally in Lemma~\ref{noWOrder} -- that a network contains a W-motif if each of the inputs to two agents, $A$ and $B$ are not contained in the other, and their intersection is not empty. That is,  $g(A) \not\subseteq g(B)$ and $g(B) \not\subseteq g(A),$ but $g(A) \cap g(B) \neq \emptyset $. If these conditions are met, we also say that the inputs of $A$ and $B$ have a \textbf{nontrivial intersection}. 
If $g(A) \subseteq g(B)$, we say that the input of $B$ \textbf{overlaps} the input of $A$: every agent which contributes to the estimate of $A$ also contributes to the estimate of $B$.

Similarly, we let $f$ be the \textbf{output-map} which maps an agent, $a_{i}^{(j)},$ to the set of all agents in the next, $j+1^{\text{st}}$, layer that receive input from $a_{i}^{(j)}$. We first prove a few lemmas essential to the proof of Theorem~\ref{thm:overlap}.

\begin{lem}
\label{noWOrder}
Assume a network does not contain a W-motif and there are two agents, $a_{i_1}^{(k)}$ and $a_{i_2}^{(k)}$, with $g(a_{i_1}^{(k)}) \cap g(a_{i_2}^{(k)})$ nonempty. Then  $g(a_{i_1}^{(k)})$ overlaps or is overlapped by $g(a_{i_2}^{(k)})$.
\end{lem}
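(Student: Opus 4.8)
The plan is to prove the contrapositive: I will assume that $g(a_{i_1}^{(k)}) \cap g(a_{i_2}^{(k)}) \neq \emptyset$ but that neither of the two input sets is contained in the other, and then exhibit a W-motif. Write $g_1 = g(a_{i_1}^{(k)})$ and $g_2 = g(a_{i_2}^{(k)})$ for brevity. Under these assumptions there exist first-layer agents $a \in g_1 \cap g_2$, $b \in g_1 \setminus g_2$, and $c \in g_2 \setminus g_1$. The first step is to check that $a$, $b$, $c$ are pairwise distinct: $b \neq a$ and $c \neq a$ since $a$ lies in both sets while $b \notin g_2$ and $c \notin g_1$; and $b \neq c$ since $b \in g_1$ while $c \notin g_1$.

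Next I would translate membership in the input-map into entries of the path matrix from the first layer: by definition, $a_j^{(1)} \in g(a_i^{(k)})$ precisely when there is a directed path from $a_j^{(1)}$ to $a_i^{(k)}$, i.e.\ $P^{k1}_{ij} = 1$. Consequently the $2 \times 3$ submatrix of $P^{k1}$ with rows indexed by $i_1, i_2$ and columns indexed by $a, b, c$ equals
$$\begin{pmatrix} P^{k1}_{i_1 a} & P^{k1}_{i_1 b} & P^{k1}_{i_1 c} \\ P^{k1}_{i_2 a} & P^{k1}_{i_2 b} & P^{k1}_{i_2 c} \end{pmatrix} = \begin{pmatrix} 1 & 1 & 0 \\ 1 & 0 & 1 \end{pmatrix},$$
which becomes $\begin{pmatrix} 1 & 1 & 0 \\ 0 & 1 & 1 \end{pmatrix}$ after reordering the columns as $b, a, c$. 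By the definition of a W-motif, the network then contains one, contradicting the hypothesis. Hence $g_1 \subseteq g_2$ or $g_2 \subseteq g_1$; in the paper's terminology, $g(a_{i_1}^{(k)})$ overlaps or is overlapped by $g(a_{i_2}^{(k)})$.

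Since the argument is essentially a direct unpacking of the definitions of the input-map, the path matrix, and the W-motif, I do not expect a genuine obstacle. The only points that need a little care are the pairwise-distinctness of the witnesses $a, b, c$ (so that the exhibited submatrix really has three columns and matches the $2\times 3$ template up to a column permutation), and the degenerate small cases: for $k = 1$ one has $g(a_i^{(1)}) = \{a_i^{(1)}\}$, so distinct agents have disjoint input sets and the hypothesis is vacuous, while for $k = 2$ the path matrix $P^{21}$ coincides with $C^{(1)}$; both are handled by the same reasoning.
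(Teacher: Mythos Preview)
Your proof is correct and follows essentially the same contrapositive argument as the paper: pick witnesses in $g_1\cap g_2$, $g_1\setminus g_2$, and $g_2\setminus g_1$, translate into entries of $P^{k1}$, and read off the forbidden $2\times 3$ submatrix. You are in fact slightly more careful than the paper, which does not explicitly verify that the common agent is distinct from the two private ones.
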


\begin{proof} 
We prove the claim by contradiction. If one input does not overlap the other, then there are two distinct first-layer agents $a_{n_1}^{(1)}$ and $a_{n_2}^{(1)}$ such that $ a_{n_1}^{(1)} \in g(a_{i_1}^{(k)}) \setminus g(a_{i_2}^{(k)})$ and  $ a_{n_2}^{(1)} \in g(a_{i_2}^{(k)}) \setminus g(a_{i_1}^{(k)})$. This means 
$P^{k 1}_{ i_1 n_1 } =  P^{k 1}_{ i_2 n_2} = 1$ and $P^{k 1}_{i_1 n_2  } = P^{k 1}_{ i_2 n_1 } = 0$. Since the inputs of the agents have nonempty intersection, we also have $P^{k 1 }_{i_1  m } = P^{k 1}_{ i_2 m } = 1$ for some $m$.
Thus there is a $2 \times 3$ submatrix of $P^{k 1}$ which, up to rearrangement of the columns, is equal to $\begin{pmatrix} 1 & 1 & 0 \\ 1 & 0 & 1 \end{pmatrix}$ and the network contains a W-motif, contrary to assumption.
\end{proof}

Every agent's estimate is a convex linear combination of estimates in the first layer, given by Eq.\eqref{E:weights}.  We will use the corresponding weight vectors in the following proofs. We show that in networks without W-motifs, agents will only be receiving collections of estimates with weight vectors which pairwise either have disjoint support (nonzero indices) or the support is contained in the support of the other agent. Thus, with no W-motifs, no two agents have inputs with nontrivial intersection. The next two lemmas will allow us to easily calculate the estimates of such agents.

\begin{lem}	 \label{L:est}
Let $r,s,t$ be positive integers, $w_i = \sigma_i^{-2}$, and consider three weight vectors applied by three agents in layer $k$, $a_1^{(k)},a_2^{(k)},$ and $a_3^{(k)}$,  to the estimates of the first layer:
\begin{align*}
v_1 &= \left( \frac{w_1}{\sum_{i = 1}^r w_i},\dots, \frac{w_r}{\sum_{i = 1}^r w_i}, 0, \dots ,0\right)\\
v_2 &= \left( \frac{w_1}{\sum_{i = 1}^{r+s} w_i},\cdots, \frac{w_{r + s}}{\sum_{i = 1}^{r+s} w_i},  0, \dots ,0 \right)\\
v_3 &= \left( 0, \dots,0, \frac{w_{r + s +1}}{\sum_{i = r + s + 1}^{r+s+ t} w_i}, \dots, \frac{w_{r + s +t}}{\sum_{i = r + s + t}^{r+s+ t} w_i},  0, \dots ,0 \right).
\end{align*}
An agent $a_i^{(k+1)}$  in $f( a_1^{(k)}) \cap f( a_2^{(k)}),$ but not in $f( a_3^{(k)})$, will use weight vector $v_2$. An agent $a_i^{(k+1)}$  in $f( a_2^{(k)}) \cap f( a_3^{(k)}),$ but not $f( a_1^{(k)})$, will use weight vector 
$$ v_4 = \left( \frac{w_1}{\sum_{i = 1}^{{r+s+ t}} w_i}, \dots,
\frac{w_{r+s+ t}}{\sum_{i = 1}^{{r+s+ t}} w_i}, 0, ...,0 \right).$$
\end{lem}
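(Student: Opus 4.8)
The plan is to derive both assertions directly from the precision-weighting / Gauss--Markov facts in Proposition~\ref{theorem1} and Eq.~\eqref{E:weight}, after computing the $2\times2$ covariance matrix $R^{(k)}_i$ of the two estimates the downstream agent $a^{(k+1)}_i$ receives. Write $S_p:=\sum_{i=1}^{p}w_i$ for the partial sums of precisions; these are finite and strictly positive since every $w_i=\sigma_i^{-2}>0$, and $S_{r+s}>S_r$ because $s\ge 1$. Because $\hs^{(1)}_i=x_i$ with the $x_i$ independent and $\var(x_i)=1/w_i$, the vectors $v_1,v_2,v_3$ are the minimum-variance unbiased weightings of $s$ built from the first-layer sets $\{x_1,\dots,x_r\}$, $\{x_1,\dots,x_{r+s}\}$ and $\{x_{r+s+1},\dots,x_{r+s+t}\}$; in particular, reading off the diagonal of Eq.~\eqref{E:omega}, $\var(\hs^{(k)}_1)=1/S_r$, $\var(\hs^{(k)}_2)=1/S_{r+s}$, and $\var(\hs^{(k)}_3)=1/(S_{r+s+t}-S_{r+s})$. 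Throughout I treat $a^{(k+1)}_i$ as receiving exactly the two estimates named in each case; iterating this step in a W-motif-free network is precisely how one would propagate the ``nested-or-disjoint supports'' structure from layer $k$ to layer $k+1$.

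For the first assertion, the received estimates $\hs^{(k)}_1=v_1\cdot\hbs^{(1)}$ and $\hs^{(k)}_2=v_2\cdot\hbs^{(1)}$ have covariance $\cov(\hs^{(k)}_1,\hs^{(k)}_2)=\sum_{i=1}^{r}(v_1)_i(v_2)_i/w_i=\tfrac{1}{S_r S_{r+s}}\sum_{i=1}^r w_i=1/S_{r+s}=\var(\hs^{(k)}_2)$. Hence $R^{(k)}_i=\begin{pmatrix}1/S_r & 1/S_{r+s}\\ 1/S_{r+s} & 1/S_{r+s}\end{pmatrix}$, which is invertible (its determinant $(S_{r+s}-S_r)/(S_rS_{r+s}^2)$ is positive), and a one-line computation gives $\1^{\mathrm T}(R^{(k)}_i)^{-1}=(0,\,S_{r+s})$. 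By Eq.~\eqref{E:weight} the agent therefore assigns weight $0$ to $\hs^{(k)}_1$ and weight $1$ to $\hs^{(k)}_2$, so its composite weight vector on the first layer is $0\cdot v_1+1\cdot v_2=v_2$. (Equivalently and more conceptually: $\hs^{(k)}_2$ is already the minimum-variance unbiased estimator of $s$ from $x_1,\dots,x_{r+s}$; every linear combination of $\hs^{(k)}_1$ and $\hs^{(k)}_2$ is a linear combination of those same measurements and hence cannot beat it; and the minimizer along the unbiased line is unique because $\var(\hs^{(k)}_1-\hs^{(k)}_2)=1/S_r-1/S_{r+s}>0$. So the agent must reproduce $\hs^{(k)}_2$.)

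For the second assertion, the supports of $v_2$ (indices $1,\dots,r+s$) and $v_3$ (indices $r+s+1,\dots,r+s+t$) are disjoint, so $\hs^{(k)}_2$ and $\hs^{(k)}_3$ are independent and $R^{(k)}_i$ is diagonal with entries $1/S_{r+s}$ and $1/(S_{r+s+t}-S_{r+s})$. Equation~\eqref{E:weight} then returns the precision-weighted combination, with weight $S_{r+s}/S_{r+s+t}$ on $\hs^{(k)}_2$ and weight $(S_{r+s+t}-S_{r+s})/S_{r+s+t}$ on $\hs^{(k)}_3$. Substituting $v_2$ and $v_3$, the coefficient of $x_i$ becomes $\tfrac{S_{r+s}}{S_{r+s+t}}\cdot\tfrac{w_i}{S_{r+s}}=\tfrac{w_i}{S_{r+s+t}}$ for $i\le r+s$ and $\tfrac{S_{r+s+t}-S_{r+s}}{S_{r+s+t}}\cdot\tfrac{w_i}{S_{r+s+t}-S_{r+s}}=\tfrac{w_i}{S_{r+s+t}}$ for $r+s<i\le r+s+t$, which is exactly $v_4$.

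The single genuinely delicate point — and the step I would check most carefully — is the collapse in the first assertion: the identity $\cov(\hs^{(k)}_1,\hs^{(k)}_2)=\var(\hs^{(k)}_2)$ makes the nested estimate $\hs^{(k)}_1$ completely redundant and forces its weight to zero, and it relies on all $w_i>0$ together with $s\ge 1$ (so that $R^{(k)}_i$ is invertible and the minimizer is unique). Everything else is routine algebra with partial sums of precisions.
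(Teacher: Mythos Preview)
Your proof is correct. The route differs from the paper's, though: the paper argues both parts by a \emph{fictitious agent} dominance argument rather than by explicit covariance algebra. For the nested case, it imagines an agent with access to the raw measurements $x_1,\dots,x_{r+s}$ (weight vectors $z_i=e_i$); any linear combination of $\hs^{(k)}_1$ and $\hs^{(k)}_2$ is also available to that agent, whose best unbiased estimate is $v_2$, and since $v_2$ itself is attainable from $\{\hs^{(k)}_1,\hs^{(k)}_2\}$ the optimum must equal $v_2$. For the disjoint case it simply says the two estimates pool non-overlapping measurement sets, so the precision-weighted merge gives $v_4$. Your parenthetical ``equivalently and more conceptually'' remark is essentially the paper's entire argument for the first part.

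What each approach buys: the paper's argument is slicker and immediately generalizes (it never needs $R^{(k)}_i$ to be invertible or the minimizer to be unique---it only needs that $v_2$ is in the span and is optimal among all combinations of $x_1,\dots,x_{r+s}$). Your explicit computation of $R^{(k)}_i$ and $\1^{\mathrm T}(R^{(k)}_i)^{-1}=(0,S_{r+s})$ is more concrete and verifies directly that the mechanism of Eq.~\eqref{E:weight} assigns weight zero to the redundant estimate; this is reassuring but requires the extra care you flag about invertibility when $s\ge 1$.
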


\begin{proof}
First, consider an agent receiving the first two estimates with weights $v_1$ and $v_2$. Suppose that a fictitious agent receives a collection of estimates with weight vectors $\{z_1, ..., z_{r+s}\}$, where $z_i = (0, \dots, 0, 1, 0 , \dots, 0)$, \emph{i.e.}, each estimate equals the measurement of agent $a^{(1)}_i$. This fictitious agent can obtain any linear combination of the first $r+s$ measurements.  The linear combination with lowest variance has weights given by $v_2$.  Therefore, an agent receiving measurements corresponding to the weight vectors $v_1$ and $v_2$ cannot do better than the estimate of agent $a_2^{(k)}$ with weights given by $v_2$.

A similar argument works when estimates are received from agents $a_2^{(k)}$ and $a_3^{(k)}$. Since these two agents make locally optimal estimates based on non-overlapping sets of measurements in the first layer, the best estimate is obtained by combining the two sets of measurements.  This is precisely the estimate corresponding to the weights given by vector $v_4$.
%
%
\end{proof}

\begin{lem} \label{L:ideal}
 Suppose an agent, $a_i^{(k)},$ receives a collection of estimates such that for any pair, there is a relabeling of agents in the first layer that makes the pair look like $v_1$ and $v_2$ or like $v_2$ and $v_3$ in Lemma~\ref{L:est}. Then, up to some relabeling of the agents in the first layer, that agent will make an estimate with corresponding weight vector
 \[
  v = \left( \frac{w_1}{\sum_{i = 1}^r  w_i} , \dots,  \frac{w_r}{\sum_{i =1}^r w_i},0 , \dots,  0 \right) .
 \]

\end{lem}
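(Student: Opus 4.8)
The plan is to induct on the number of estimates that $a_i^{(k)}$ receives, using Lemma~\ref{L:est} as the engine for combining two estimates at a time. The key structural observation is that the hypothesis — every pair of received weight vectors looks, after relabeling, like $(v_1,v_2)$ or $(v_2,v_3)$ from Lemma~\ref{L:est} — says precisely that the supports of the received weight vectors form a \emph{laminar family}: any two supports are either nested or disjoint. (The ``$v_1,v_2$'' case is nesting; the ``$v_2,v_3$'' case is disjointness. This is exactly the dichotomy established in Lemma~\ref{noWOrder} for W-motif-free networks, which is why this lemma will be applicable there.) Moreover, in each nested pair the weight vector with the larger support is the precision-weighted average over that entire support, i.e. it is ``locally ideal'' on its support; and in each disjoint pair both are locally ideal on their respective supports. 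So the received collection is a family of precision-weighted averages indexed by the sets of a laminar family.

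First I would set up the induction on $L$, the number of estimates received by $a_i^{(k)}$. The base case $L=1$ is immediate: the agent just copies the single estimate, whose weight vector is already of the claimed form (it is locally ideal on its support). For the inductive step, consider the laminar family $\mathcal{S}$ of supports. Pick a support $S$ that is maximal in $\mathcal{S}$ (not contained in any other). Every other support in $\mathcal{S}$ is either contained in $S$ or disjoint from $S$. Group the received estimates accordingly: those whose supports are $\subseteq S$, and those whose supports are disjoint from $S$. By the inductive hypothesis applied to the sub-collection with supports $\subseteq S$ (these still satisfy the laminar hypothesis), the best the agent can do from that sub-collection alone is the precision-weighted average over $\bigcup\{T \in \mathcal{S} : T \subseteq S\} \subseteq S$; but since the estimate with support exactly $S$ is already in the collection and is the precision-weighted average over all of $S$, Lemma~\ref{L:est} (the $v_1,v_2$ absorption: a fictitious agent with the indicator weight vectors $z_i$ over $S$ does at least as well, and its optimum is $v_2$) tells us the combined optimum over everything supported in $S$ is exactly the locally ideal estimate on $S$. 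So the whole collection is equivalent, for the purpose of the optimal linear combination, to: the single locally-ideal-on-$S$ estimate, together with the estimates whose supports are disjoint from $S$. Those disjoint supports have strictly smaller total size, so by induction the agent's optimal combination of them is locally ideal on their union $U$, a set disjoint from $S$.

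It remains to combine the two locally ideal estimates on the disjoint sets $S$ and $U$. This is exactly the $v_2,v_3 \mapsto v_4$ step of Lemma~\ref{L:est}: the optimal linear combination of two independent precision-weighted averages over disjoint index sets is the precision-weighted average over the union. Hence the agent's estimate has weight vector equal to the precision-weighted average over $S \cup U$, which after relabeling the first-layer agents so that $S \cup U = \{1,\dots,r\}$ is exactly the vector $v$ in the statement.

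The main obstacle — really the only subtle point — is making rigorous the repeated appeal to ``the agent cannot do better than the local optimum on the union of supports it effectively sees.'' One must be careful that after absorbing nested estimates into one, and after inductively collapsing a disjoint sub-family into one, the \emph{full} optimum over the original collection is not strictly better than the optimum over this reduced two-element collection. This is handled by the fictitious-agent comparison already used in the proof of Lemma~\ref{L:est}: passing to the finer collection of indicator weight vectors $z_i$ over $S \cup U$ can only enlarge the span of achievable linear combinations, its minimum-variance unbiased element is the precision-weighted average over $S\cup U$, and that element is in fact already in the span of our reduced two-element collection — so the reduced collection is not strictly worse than the finer one, and a fortiori not strictly worse than the original. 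I would state this monotonicity (enlarging the set of available estimates cannot increase the minimum variance, and equality holds once the global optimum is realizable) once, explicitly, and then invoke it at each step rather than re-deriving it.
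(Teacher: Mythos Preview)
Your proof is correct and shares with the paper the two essential ingredients: (i) the observation that the hypothesis makes the supports of the received weight vectors a laminar family, and (ii) the fictitious-agent upper bound showing that the precision-weighted average over the full union of supports is unbeatable. Where you differ is in how you exhibit that average as a linear combination of the received estimates. You peel off one maximal support $S$ at a time, absorb everything nested inside it, and induct on the remaining disjoint family, finishing with the $v_2,v_3\mapsto v_4$ step. The paper instead does this in one shot: for each first-layer index $j$ in the union, it selects the (unique) support that is maximal among all received supports containing $j$; the distinct such maximal supports partition $\{1,\dots,r\}$ by laminarity, and the corresponding locally ideal estimates combine with precision weights to give $v$ directly, with no induction. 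Your route is a clean recursive unwinding of the laminar tree; the paper's is the closed-form version that writes the answer as a single weighted sum over the roots of that tree. One minor wrinkle in your write-up: invoking the inductive hypothesis on the sub-collection with supports contained in $S$ is unnecessary (and would be circular in the case where every support lies in $S$); the fictitious-agent absorption argument you state immediately afterward already handles that part on its own, so you should simply drop the inductive appeal there and reserve induction for the strictly smaller disjoint remainder.
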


\begin{proof}
Let the vectors $z_i$ be defined as in the proof of Lemma~\ref{L:est}.  Relabel the first-layer  agents so that only the first $r$ entries of the weight vector applied by agent $a_i^{(k)}$  are non-zero. Then a fictitious agent  receiving estimates with weight vectors $z_i,  1 \leq i \leq r$ can construct any estimate that agent $a_i^{(k)}$ can obtain. The optimal 
estimate of this fictitious agent has weight vector $v$.  Hence if some linear combination of the weight vectors  of estimates communicated to agent $a_i^{(k)}$ equals  $v$, this linear combination defines the best estimate.

 Then for each $j = 1, ..., r$, we can find a weight vector, $v_j$, which is nonzero in the $j^\text{th}$ entry with support that contains the support of every other weight vector which is nonzero in the $j^\text{th}$ entry. Such a vector exists by the assumption that any two vectors have disjoint support or the support of one contains the other. Therefore, we can find the weight vector with maximal support for each entry. If we take the distinct elements of $\{ v_j : 1 \leq j \leq r \}$, then these maximal weight vectors will have disjoint support that partitions the first $r$ indices.  Therefore, 

\[
v = \frac{1}{\sum_{i = 1}^r w_i} \sum_{ v_j \text{ distinct}}  \left( \sum_{i = 1, v_j^i \text{ nonzero}}^r w_i \right)  v_j ,
\]
\noindent which shows the lemma.
\end{proof}

We now state and prove the three-layer case of Theorem~\ref{thm:overlap} and then use it to finish the proof of Theorem~\ref{thm:overlap}.

\begin{prop} \label{w-motif}
If a three-layer network is not ideal and every first-layer agent communicates with at least one second-layer agent, then the network must contain a W-motif.
\end{prop}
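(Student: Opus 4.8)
The plan is to prove the contrapositive: a three-layer network containing no W-motif, in which every first-layer agent communicates with at least one second-layer agent, must be ideal. Since the third layer consists of a single agent receiving the estimates $\hat{s}^{(2)}_1,\ldots,\hat{s}^{(2)}_{L_2}$ of all second-layer agents, it is enough to show that this agent's locally optimal estimate has weight vector $\tfrac{1}{\sum_i w_i}(w_1,\ldots,w_{L_1})$ on the first layer, because by Eq.~\eqref{E:opt} that is exactly $\hat{s}_{\text{ideal}}$. So the whole argument reduces to computing the final agent's weight vector under the no-W-motif assumption, and for that I would lean on Lemmas~\ref{noWOrder}, \ref{L:est}, and \ref{L:ideal}.

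First I would extract the combinatorial content of the hypothesis. By Lemma~\ref{noWOrder}, for any two second-layer agents the input sets $g(a^{(2)}_i)$ and $g(a^{(2)}_j)$ are either disjoint or nested, so $\{g(a^{(2)}_i)\}_{i=1}^{L_2}$ is a laminar family of subsets of the first layer; and since every first-layer agent communicates with at least one second-layer agent, $\bigcup_{i=1}^{L_2} g(a^{(2)}_i)$ is the whole first layer. By the precision-weighting rule (the ``Weighting by Precision'' paragraph), the weight vector $\w^{(2)}_i$ applied by $a^{(2)}_i$ to the first-layer estimates is supported exactly on $g(a^{(2)}_i)$ and on that support equals the locally optimal vector $\bigl(w_\ell/\sum_{k\in g(a^{(2)}_i)}w_k\bigr)_\ell$. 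Hence each such vector has the shape of $v_1$, $v_2$, or $v_3$ in Lemma~\ref{L:est}, and, after relabeling the first layer, any pair of them looks like $v_1,v_2$ in the nested case or like $v_2,v_3$ in the disjoint case (the latter taking the ``$v_1$-part'' of $v_2$ to be empty).

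This is precisely the hypothesis of Lemma~\ref{L:ideal}, which I would apply to the final agent. The lemma then yields that the final agent's weight vector on the first layer is the locally optimal vector on the union of the supports of all received estimates, i.e. on the whole first layer, which is $\tfrac{1}{\sum_i w_i}(w_1,\ldots,w_{L_1})$; so the final estimate equals $\hat{s}_{\text{ideal}}$ and the network is ideal. I expect the substantive work to lie entirely in the already-established lemmas, so the only real care needed here is bookkeeping: confirming that the pairwise statement in Lemma~\ref{L:ideal} genuinely handles a laminar family (its proof picks, for each first-layer index in the union, the received weight vector of maximal support containing that index, which is well defined exactly by pairwise nestedness/disjointness), and checking that the model's reduction to invertible submatrices $R^{(2)}_i$ leaves all estimates unchanged, so that the identity established for the reduced network transfers back to the original one. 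Fitting a disjoint pair into the $v_2,v_3$ template is the one trivial degenerate case to flag.
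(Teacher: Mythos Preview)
Your proposal is correct and follows essentially the same route as the paper: prove the contrapositive, invoke Lemma~\ref{noWOrder} to conclude that the second-layer input sets are pairwise disjoint or nested, observe that the resulting precision-weighted vectors satisfy the hypotheses of Lemma~\ref{L:ideal}, and apply that lemma to the final agent (whose union of supports is all of the first layer) to obtain the ideal weight vector. Your write-up is more explicit about the laminar structure and the bookkeeping around Lemma~\ref{L:ideal}, but the argument is the same.
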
 

\begin{proof} 
Assume the network does not contain a W-motif.
Given a first-layer agent $a^{(1)}_i$, Lemma~\ref{noWOrder} says that for any two agents in $f(a^{(1)}_i)$, one agent's input must overlap the other. Two second-layer agents thus receive estimates with input sets where one overlaps the other, or the sets do not intersect.  Thus the set of weight vectors in the second layer satisfies the assumptions of Lemma~\ref{L:ideal}. As all agents from the first layer communicate with the final agent, the network is ideal.
\end{proof}

To obtain the proof of Theorem~\ref{thm:overlap}, we  use induction with Proposition~\ref{w-motif} as a base case.

\begin{proof}[Proof of Theorem \ref{thm:overlap}.]
Assume the network has $n$ layers, there are no W-motifs, and every agent (except those in the first layer) receives input from at least one other agent. 
Lemma~\ref{noWOrder} implies that in the second layer each pair of agents has either disjoint input or one overlaps the other. Thus in the third layer, by relabeling the agents, each agent makes an estimate with weight vector of the form: $\frac{1}{\sum_{i = 1}^r w_i}(w_1, \dots , w_r, 0 , \dots, 0)$.

Now assume that any estimate in layer $k$ can be put in this form by relabeling the agents. Since there are no W-motifs, Lemma~\ref{noWOrder} implies that set of measurements used by agents $a^{(k)}_{i_1}$ and $a^{(k)}_{i_2}$ is disjoint or overlapping. This again allows us to apply Lemma~\ref{L:ideal}  and any agent in layer $k+1$ makes an estimate whose weight vector again has the form $\frac{1}{\sum_{i = 1}^r w_i}(w_1, \dots , w_r, 0 , \dots, 0)$. Applying the same argument to the  final agent, where every entry will be nonzero in some penultimate-layer agent's weight vector, we have that the network is ideal.

\end{proof}

\section{Proof of Corollary~\ref{Cor1}}  \label{A2}

We will show that a three-layer network is ideal if and only if $m\vec{1}$ is in the row space of $C^{(1)}$ over $\Z$ for some $m \in \N$. We do this by first showing that the network is ideal if and only if $\vec{1}$ is in the row space of $C^{(1)}$ over $\R$, and then we show that this is equivalent to  $m\vec{1}$ being in the row space of $C^{(1)}$ over $\Z$. 

By Proposition~\ref{matrix_cond}, a three-layer network is ideal if and only if $(w_1, \dots, w_{L_1})$ is in the row space of $W^{(2)}$. We claim that this is equivalent to $\vec{1}$ being in the row space of $C^{(1)}$:  Multiplying each row of $W^{(2)}$ by the common denominator of the nonzero entries gives
\[
\mathcal{R}( W^{(2)} ) = \mathcal{R} ( C^{(1)} \text{Diag}(w_1, \dots, w_{L_1}) ),
\]
where $\mathcal{R}$ denotes the row space. By definition, $\vec{1}$ is a linear combination of the rows of $C^{(1)}$ if and only if
\[
1 = \sum_{i} \beta_i C^{(1)}_{i j} , \; \; \; \forall j.
\]
This holds if and only if
\begin{equation*}
w_j = \sum_{i} \beta_i w_j C^{(1)}_{i j} , \; \; \; \forall j. \\
\end{equation*} 
The last equality is equivalent to
\[
(w_1, \dots, w_{L_1}) = \sum_i \beta_i (C^{(1)} \text{Diag}(w_1, \dots, w_{L_1}))_{i} \; \; ,
\]
which means $(w_1, \dots, w_{L_1})$ is in the row space of $W^{(2)}$. Hence, for three-layer networks, the network is ideal if and only if the vector $\vec{1}$ is in the row space of $C^{(1)}$ over $\R$.

Thus it remains to show that $\vec{1} \in \mathcal{R} ( C^{(1)})$ over $\R$ is equivalent to $\vec{1} \in \mathcal{R} ( C^{(1)})$ over $\Z$.  If  $  m \vec{1} \in \mathcal{R} ( C^{(1)})$ over $\Z$, then it is a linear combination of the rows of $C^{(1)}$ with integer coefficients. Multiplying the coefficients of this linear combination by $\frac 1 m$ shows that $\vec{1}$ is in the row space of $C^{(1)}$ and hence the network is ideal. 

If $\vec{1}$ is in the row space of $C^{(1)}$ over $\R$, then by closure of $\Q^n$ this means there is some linear combination of the rows of $C^{(1)}$ over $\Q$ which is equal to $\vec{1}$:
\[
\sum_{i = 1}^{L_2} \alpha_i C^{(1)}_i = \vec{1} , \qquad \alpha_i \in \Q.
\]

\noindent Multiplying both sides by the absolute value of the product of the denominators of the nonzero $\alpha_i$ shows that
\[
\sum_{i = 1}^{L_2} \beta_iC^{(1)}_i = m \vec{1} , \qquad \beta_i \in \Z
\]

\noindent for some $m \in \N$ and thus $m\vec{1}$ is in the row space of $C^{(1)}$ over $\Z$.

\section{Proof of Proposition~\ref{prop:maxvar}} \label{A:maxvar}

We will show that the network architecture that maximizes the variance of the final estimate for a given number of first and second-layer agents is the one shown in Fig.~\ref{Fig:Ring}. To simplify notation we write $L_1 = n$ and $L_2 = m$. 

\begin{lem} If $\mathbf{d} = (d_1, ... , d_{n})$ is the vector of out-degrees in the first layer, so $d_i = | f(a_i^{(1)}) |$, then to maximize the variance of the final estimate, $\mathbf{d}$ must equal $(m, 1, \dots, 1)$, up to relabeling. 
\end{lem}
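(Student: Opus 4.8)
The plan is to turn $\text{Var}(\hat s)$ into a projection length attached to the $0/1$ matrix $C^{(1)}$, bound that length below by a scalar invariant of the out-degree sequence, optimize the invariant, and then match the bound with an explicit network.

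First I would record the formula $\text{Var}(\hat s)=1/\lVert P\mathbf 1\rVert^2$, where $\mathbf 1=\mathbf 1_n$ and $P$ is the orthogonal projection onto $\mathcal W:=\mathcal R(W^{(2)})$. By Proposition~\ref{theorem1} the final estimate is the minimum‑variance unbiased linear functional of the second‑layer estimates; since the first‑layer measurements are independent with unit variance, such functionals are exactly the maps $\hbs^{(1)}\mapsto\alpha^{\T}\hbs^{(1)}$ with $\alpha\in\mathcal W$ and $\mathbf 1^{\T}\alpha=1$, each of variance $\lVert\alpha\rVert^2$, and the minimum of $\lVert\alpha\rVert^2$ on that affine slice of $\mathcal W$ is $1/\lVert P\mathbf 1\rVert^2$. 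Because every nonzero row of $W^{(2)}$ is a positive multiple of the corresponding row of $C^{(1)}$ (scaling by the reciprocal in‑degree), $\mathcal W=\mathcal R(C^{(1)})$; and the edge‑removal step that makes $\Omega^{(2)}$ invertible only deletes second‑layer estimates already in the span of the others, so it leaves $\mathcal W$ unchanged. Hence maximizing the variance is exactly minimizing $\lVert P\mathbf 1\rVert^2$ over $0/1$ matrices $C^{(1)}$ with no zero column.

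Next I would extract the out-degree sequence. The vector $\mathbf d=(d_1,\dots,d_n)$ of out-degrees is the column-sum vector of $C^{(1)}$, i.e.\ the sum of its rows, so $\mathbf d\in\mathcal W$; projecting $\mathbf 1$ onto the line $\mathbb R\mathbf d$ gives $\lVert P\mathbf 1\rVert^2\ge(\mathbf 1^{\T}\mathbf d)^2/\lVert\mathbf d\rVert^2$, hence
\[
\text{Var}(\hat s)\ \le\ \frac{\sum_j d_j^2}{\left(\sum_j d_j\right)^2}.
\]
The right-hand side is invariant under positive scaling of $\mathbf d$, and I would maximize it over the box $1\le d_j\le m$: for fixed $\sum_j d_j$ the convex function $\sum_j d_j^2$ is maximized at a vertex of the sliced box (all but one coordinate at a bound), which reduces the problem to a one-parameter computation showing the maximum over the whole box is attained only at $\mathbf d=(m,1,\dots,1)$, with value $V^{\ast}:=(m^2+n-1)/(m+n-1)^2$. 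Consequently every network whose out-degree sequence is not $(m,1,\dots,1)$ has $\text{Var}(\hat s)<V^{\ast}$.

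To finish I would exhibit one network with out-degree sequence $(m,1,\dots,1)$ whose variance exceeds the best bound $V^{\ast\ast}:=\max_{\mathbf d'\neq(m,1,\dots,1)}\sum_j(d'_j)^2/(\sum_j d'_j)^2$ available from any other sequence; the network of Fig.~\ref{Fig:Ring} (one agent of out-degree $m$ feeding every second-layer agent, each of the other $n-1$ agents feeding a distinct second-layer agent) is the candidate, and its variance is computable in closed form because $\Omega^{(2)}$ is a rank-one update of a diagonal matrix. Since the maximal variance is then at least this value while every other out-degree profile caps it at $V^{\ast\ast}$ below it, any maximizer must have out-degree sequence $(m,1,\dots,1)$. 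The main obstacle is exactly this last comparison: the test-vector bound of the previous paragraph is tight for the ring only when $m=n-1$, so for $m>n-1$ one genuinely has to control the ring's variance against $V^{\ast\ast}$, or argue directly by an exchange move --- redirecting one out-edge from a lower-degree first-layer agent onto a second-layer agent the highest-degree agent does not yet reach --- and verify that this strictly decreases $\lVert P\mathbf 1\rVert^2$, ruling out every profile but $(m,1,\dots,1)$. Showing that a single such edge move strictly lowers $\lVert P\mathbf 1\rVert^2$ is the technical crux.
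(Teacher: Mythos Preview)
Your strategy is the paper's: bound $\text{Var}(\hat s)$ above by the degree-only quantity $V(\mathbf d)=\sum_j d_j^{\,2}/(\sum_j d_j)^2$ and then maximize $V$ over $[1,m]^n$. Your projection bound $\lVert P\mathbf 1\rVert^2\ge(\mathbf 1^{\T}\mathbf d)^2/\lVert\mathbf d\rVert^2$ is exactly the paper's ``na\"ive estimate'' (first-layer weights $d_j/\sum_k d_k$) read through the identity $\text{Var}(\hat s)=1/\lVert P\mathbf 1\rVert^2$; and your convexity/vertex maximization of $V$ is a cleaner substitute for the paper's calculus argument (interior critical points $k\vec 1$, boundary iteration down to corners, then a direct comparison of the corners $(m,\dots,m,1,\dots,1)$ using $m\ge n-1$).

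The gap is in how you close. The paper does not need a comparison with $V^{\ast\ast}$ or an edge-exchange move: its one extra observation is that for the ring network the na\"ive estimate \emph{equals} the optimal one, so your projection inequality is an equality there. In your language, for the ring one checks that $P\mathbf 1$ is already a scalar multiple of $\mathbf d$ (the row space of the ring's $C^{(1)}$ is the hyperplane $\{v:v_1=\sum_{i\ge 2}v_i\}$, and projecting $\mathbf 1$ onto it yields $(2/n)\,\mathbf d$), hence $\text{Var}(\hat s_{\text{ring}})=V(\mathbf d_{\text{ring}})$. Then any competing network with degree vector $\mathbf d'$ satisfies $\text{Var}(\hat s)\le V(\mathbf d')\le \max V=V(\mathbf d_{\text{ring}})=\text{Var}(\hat s_{\text{ring}})$, and the proof ends. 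Your flagged ``technical crux'' dissolves once you verify this tightness; the separate closed-form computation of the ring's variance and the proposed exchange argument are unnecessary.

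Your caution about $m>n-1$ is, incidentally, well placed: if $m>n-1$ then every network with out-degree sequence exactly $(m,1,\dots,1)$ must, by pigeonhole, have some second-layer agent connected only to the central first-layer agent, and is therefore ideal. So the lemma as literally stated is really a statement about $m=n-1$, which is where the ring of Fig.~\ref{Fig:Ring} lives and where both your argument and the paper's go through.
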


\begin{proof}[Proof of Claim]
Given a network structure consider the na\"ive estimate:
\begin{equation} \label{E:naive}
\frac 1Z \sum_i |g(a_i^{(2)})| \hs_i^{(2)} = \frac{1}{\sum_{i j} C_{i j}^{(1)}} \sum_i C_i^{(1)} \cdot \hbs^{(1)},
\end{equation}
where $Z$ is a normalizing factor that makes the entries of the corresponding vector of weights sum to 1. This estimate can always be made and is the same as using a linear combination of estimates of agents $a_j^{(1)}$ with weights $\frac{d_i}{\sum_{j = 1}^{n} d_j}$. Thus the variance of the optimal estimate of the agent in the final layer is bounded above by the variance of the na\"ive estimate in Eq.~\eqref{E:naive}. By assumption $1 \leq d_j \leq m$ for all $j$. For the network in Fig.~\ref{Fig:Ring}, this na\"ive estimate equals the final estimate.  Thus it is sufficient to show that the na\"ive estimate has maximal variance when $\mathbf{d} = (m, 1, \dots, 1)$, up to relabeling.

The variance, $V$, of the na\"ive estimate is:
\[
V(d_1, \dots, d_n) = \sum_j \left( \frac{d_j}{\sum_{k = 1}^{n} d_k} \right) ^2 .
\]

If we treat the degrees as continuous variables then $V$ is continuous on $\mathbf{d} \in [1,m]^n$ and we can calculate the gradient of $V$ to find the critical points. 
\[
\frac{\partial V}{\partial d_i} = 
	2 \left( \frac{d_i}{\sum_k d_k} \right)  \frac{ \sum_k d_k - d_i}{\left( \sum_k d_k \right)^2} 
	+ \sum_{j \neq i} 2 \left( \frac{d_j}{\sum_{k} d_k} \right) \frac{-d_j}{\left( \sum_{k} d_k \right)^2}
\]

Setting $ \frac{\partial V}{\partial d_i} = 0$ and multiplying both sides by $\frac 12 \left( \sum_{k = 1}^{n} d_k \right)^3$ gives
\begin{align*}
0 &= d_i ( \sum_{k \neq i} d_k)
	- \sum_{j \neq i}  d_j^2 = \sum_{j \neq i} d_j (d_i - d_j).
\end{align*}
This shows that $d = k \vec{1}$ for $k = 1, \dots , m$ are the only critical points, since if there exist $\ d_i \leq d_j,$ for all $j \neq i$ and $d_i < d_k$ for some $k \neq i$ then the right hand side would be negative. These critical points are the first-layer out-degrees of ideal networks by Corollary~\ref{cor2}, hence they are minima. This implies that $V$ takes on its maximum values on the boundary.

The boundary of $[1,m]^n$ consists of points where at least one coordinate is $1$ or $m$. Since $V$ is invariant under permutation of the variables, we set $d_1$ equal to one of these values and investigate the behavior of $V$ on this restricted set.

First set $d_1 = m$. Setting $\frac{\partial V}{\partial d_i}$ to 0 on this boundary gives:
\begin{align*}
0 &= m(d_i - m) + \sum_{j \neq i, 1} d_j (d_i - d_j) 
\end{align*}
One critical point is thus $m \vec{1}$. If $d_i \leq d_j$ for $j \neq i$ and $d_i < m$ then again the right hand side would be negative. Hence $d_i = m$ for all $i$, and there are no critical points on the interior of $\{m\} \times [1,d]^{n-1}$.

Next if  $d_1 = 1$, setting $\frac{\partial V}{\partial d_i}$ to 0 on this boundary and multiplying by $-1$ gives:
\begin{align*}
0 &= 1 - d_i + \sum_{j \neq i, 1} d_j (d_j - d_i)
\end{align*}

Here a critical point is $\vec{1}$. If $d_i \leq d_j$ for $j \neq i$ and $1 < d_i < m $ then again the right hand side would be negative. Hence $d_i = 1$ for all $i$,  and there are no critical points on the interior of $\{1\} \times [1,d]^{n-1}$.
If we iterate this procedure, we see that the maximum value of $V$ must occur on the corners of the hypercube $[1,d]^n$.

Choose one of these corners, $\mathbf{c}$, and, without loss of generality, assume that the first $l$ coordinates are $m$ and the last $n - l$ coordinates are 1, $1 \leq l < n$. Then
\begin{align*}
V(\mathbf{c}) &= \sum_{j = 1}^l \left( \frac{m}{\sum_{k = 1}^{n} d_k} \right) ^2 
+  \sum_{j = l+1}^n \left( \frac{1}{\sum_{k = 1}^{n} d_k} \right) ^2 \\
&= \left(\frac{1}{lm + (n-l)}\right)^2 \left( l m^2 + (n - l) \right) \\
&= \frac{ lm^2 + n - l}{ l^2 m^2 + 2 l m (n -l) + (n-l)^2} \\
&= \frac{ l (m^2 - 1) + n}{l^2 ( m-1)^2 + l2n(m-1) + n^2}
\end{align*}
Under the assumption that $m \geq n -1$, a lengthy algebra calculation that we omit shows that this is maximized for $l = 1$. Hence the maximum value of $V$ is achieved at $(m,1,\dots,1)$, or any of its coordinate permutations.


\end{proof}

Finally, to have $\mathbf{d} = (m, 1, \dots, 1)$, one first-layer agent, $a_1^{(1)}$, communicates with all second-layer agents and every other agent has exactly one output. Since there are at least $n -1$ agents in the second layer, this means that each first-layer agent must communicate with  a distinct second-layer agent and each second-layer agent must receive input from $a_1^{(1)}$. Otherwise, some agent in the second layer would receive only the input from $a_i^{(1)}$ and thus the final estimate could use that estimate to decorrelate all of the second-layer estimates. 

So, the na\"ive estimate for an alternative network has smaller variance than the ideal estimate for the ring network in Fig.~\ref{Fig:Ring}. Hence the final estimate in any alternative network will have smaller variance. Since the only network with $\mathbf{d} = (m, 1, \dots, 1)$ is the network in Fig.~\ref{Fig:Ring}, we have shown that this structure maximizes the variance of the final estimate among all networks with $L_2 \geq L_1 - 1$.

\section{Proof of Corollary~\ref{rand3}} \label{A:rand}

Whether or not $\hat{s}_\text{ideal} = \hs$ is  determined by $C^{(1)}$. For simplicity, we drop the superscript and refer to this connectivity matrix as $C$. By our assumption, this is a random matrix with $P(C_{ij} = 0) = P(C_{ij} = 1) = 1/2$. 

First assume that there are at least as many second-layer agents as there are first-layer agents: $L_2 \geq L_1$ or $\frac{L_1}{L_2} \leq 1$. Then $C$ is a random $L_2 \times L_1$ matrix with i.i.d. non-degenerate entries that has more rows than columns. By Theorem \ref{komThm}, this means that  the $L_1 \times L_1$  submatrix formed by the first $L_1$ rows and columns is nonsingular with probability approaching 1 as $L_1, L_2 \to \infty$.
Thus the probability that the row space of  $C$ contains the vector $\vec 1$ converges to 1 with the size of the network. 

Next assume that there are fewer second-layer agents than first-layer agents, that is $L_2 < L_1$ or $\frac{L_1}{L_2} > 1$. We will show that the probability that the row space of $C$ contains $\vec 1$ goes to zero as $L_1, L_2 \to \infty$.
Since increasing the number of rows will not decrease the probability that $C$ contains a vector in its row space we assume that $L_2 = L_1 - 1$ and let $L_1 = n$:

$$ \lim_{L_1,L_2 \to \infty} P(\hs = \hat{s}_\text{ideal} ) \leq \lim_{n \to \infty} P(\vec 1 \in R(C(n-1,n)))$$

\noindent where $C(n-1,n)$ refers to the random matrix as before, and identifies that it has $n-1$ rows and $n$ columns. We first use:
\[
P(\vec 1 \in R(C(n-1,n))) \leq P( \left( \begin{matrix} \vec 1 \\ C \end{matrix} \right) \text{ is singular} )
\]
\noindent since if $\vec 1$ is the row space of $C$, then attaching that row of ones to it would create a singular matrix.

\begin{lemma} $P \left( \det (  \left( \begin{matrix} \vec 1 \\ C \end{matrix} \right) ) = 0 \right) \to 0$ as $n \to \infty$. 
\end{lemma}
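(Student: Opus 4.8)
The plan is to convert the matrix $\left(\begin{matrix}\vec 1 \\ C\end{matrix}\right)$, whose first row is deterministic, into a genuine matrix of i.i.d.\ Bernoulli entries of size one smaller, and then invoke Komlós' theorem (Theorem~\ref{komThm}). Here $C = C(n-1,n)$ has i.i.d.\ entries with $P(C_{ij}=0)=P(C_{ij}=1)=1/2$, so $M := \left(\begin{matrix}\vec 1 \\ C\end{matrix}\right)$ is an $n\times n$ matrix.

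First I would apply the column operations $\mathrm{col}_j \mapsto \mathrm{col}_j - \mathrm{col}_1$ for $j = 2, \dots, n$, which leave $\det M$ unchanged and make the first row equal to $(1,0,\dots,0)$. Cofactor expansion along that row yields
$$ \det M \;=\; \det \tilde C, \qquad \tilde C_{ik} := C_{i,k+1} - C_{i,1} \quad (1 \le i,k \le n-1), $$
so it suffices to show that $P(\det \tilde C = 0) \to 0$.

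The main step is to identify the law of $\tilde C$ up to sign changes of its rows. The rows of $\tilde C$ do not have independent entries, since $C_{i,1}$ enters every entry of row $i$; conditioning removes this dependence. Condition on the first column $(C_{1,1}, \dots, C_{n-1,1})$ of $C$; the remaining entries $C_{i,k}$ with $k \ge 2$ are i.i.d.\ Bernoulli$(1/2)$ and independent of it. Let $D = \mathrm{diag}(\varepsilon_1, \dots, \varepsilon_{n-1})$ with $\varepsilon_i = 1 - 2 C_{i,1} \in \{+1,-1\}$. Then $(D\tilde C)_{ik} = C_{i,k+1}$ if $C_{i,1} = 0$ and $(D\tilde C)_{ik} = 1 - C_{i,k+1}$ if $C_{i,1} = 1$; in both cases this is a Bernoulli$(1/2)$ variable, these variables are jointly independent over all $i,k$, and their joint law does not depend on the conditioning. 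Hence, given the first column, $D\tilde C$ is an $(n-1)\times(n-1)$ matrix with i.i.d.\ Bernoulli$(1/2)$ entries; since $\det(D\tilde C) = \left(\prod_i \varepsilon_i\right)\det \tilde C = \pm \det \tilde C$ and the conditional law is the same for every value of the first column, we obtain $P(\det \tilde C = 0) = P(\det B = 0)$, where $B$ is an $(n-1)\times(n-1)$ matrix with i.i.d.\ Bernoulli$(1/2)$ entries.

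Finally, the Bernoulli$(1/2)$ distribution function is non-degenerate, so Theorem~\ref{komThm} applies to $B$ and gives $P(\det B = 0) \to 0$ as $n \to \infty$; chaining the equalities above proves the lemma. The only delicate point — and the one I expect a reader to examine — is the conditioning-and-row-negation argument showing that the dependent matrix $\tilde C$ is, in distribution, a sign-flipped i.i.d.\ Bernoulli matrix; the column operation and the appeal to Komlós are routine.
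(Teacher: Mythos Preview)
Your proof is correct and follows essentially the same approach as the paper. The only cosmetic difference is that the paper isolates the \emph{last column} and performs \emph{row} operations (subtracting the all-ones row from each row whose last entry is $1$, then negating), whereas you isolate the \emph{first column} and perform \emph{column} operations followed by row negations via the diagonal matrix $D$; in both cases the key observation is that $1 - C_{ij}$ has the same Bernoulli$(1/2)$ law as $C_{ij}$, so the resulting $(n-1)\times(n-1)$ matrix is i.i.d.\ Bernoulli$(1/2)$ and Komlós' theorem applies.
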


We can rewrite $C = \left( \begin{matrix} B & \mathbf{v} \end{matrix} \right)$, where $\mathbf{v}$ is the $n^\text{th}$ column of $C$ and $B$ is the remaining submatrix. 
We claim 
\begin{equation}
\label{detClaim}
(  \left( \begin{matrix} \vec 1 \\ C \end{matrix} \right) )  
= -1^k \det (  \left( \begin{matrix} \vec 1 & 1\\ \tilde{B} & \vec 0 \end{matrix} \right) ) 
= -1^{k + n + 1} * \det(\tilde{B})
\end{equation} 
where $\tilde{B}$ is a random $(n-1) \times (n-1)$ matrix distributed like $C$. 
Assuming this claim, then by \cite{Komlos68} : 
$$P(\det (  \left( \begin{matrix} \vec 1 \\ C \end{matrix} = 0 \right) ) ) = P\left( \det(\tilde{B}) = 0 \right) \to 0  \quad \text{as} \quad n \to \infty.$$ 
Thus $P(\vec 1 \in R(M(n-1,n))) \to 0$ as $n \to \infty$. 

To prove the first equality in Eq.~\eqref{detClaim}, we use row operations on $\left( \begin{matrix} \vec 1 & 1\\ B & \mathbf{v} \end{matrix} \right)$:
If $v_i = 1$ then subtract the first row from the $i^\text{th}$ row, $(B_i \; v_i)$, to get a vector whose entries are all $0$ and $-1$. Then $(B_i \; v_i) \to - (\tilde{B}_i \; 0)$ where $(\tilde{B}_i \; 0)$ is a vector of entries which are again either 0 or 1 with equal probability. We do this for every row which has a 1 in its last entry and multiply the determinant a factor $-1$ and denote the number of these reductions as $k$. Since $P( C_{i j} = 0) = \frac 12$ we also have $P(\tilde{B}_{i j} = 0) = \frac 12$.

\end{appendix}

\section*{\refname}
\bibliographystyle{elsarticle-harv}
\bibliography{FFNetsBib}

\end{document}